\def\BibTeX{{\rm B\kern-.05em{\sc i\kern-.025em b}\kern-.08em
    T\kern-.1667em\lower.7ex\hbox{E}\kern-.125emX}}
\newtheorem{theorem}{Theorem}
\newtheorem{corollary}{Corollary}
\newtheorem{lemma}{Lemma}
\newtheorem{proposition}{Proposition}
\newtheorem{remark}{Remark}
\newtheorem{assumption}{Assumption}
\newcommand*\diff{\mathop{}\!\mathrm{d}}
\newcommand{\dxdt}[1]{\dot{x}_{#1}(a,t)}
\newcommand{\dxda}[1]{x'_{#1}(a,t)}
\newcommand{\xat}[1]{x_{#1}(a,t)}
\newcommand{\xstar}[1]{x^*_{#1}(a)}
\newcommand{\psitminusa}[1]{\psi_{#1,t}}
\newlength{\mlLegendThickness}
\newlength{\mlLegendHeight}
\newcommand{\mlLineLegend}[1]{\mbox{\color{#1}
		\protect\rule[\mlLegendHeight]{3mm}{\mlLegendThickness}\hspace*{-1mm}
}}
\newcommand{\mlLineLegendDashed}[1]{\mbox{\color{#1}
		\protect\rule[\mlLegendHeight]{1.5mm}{\mlLegendThickness}\hspace*{0mm}
		\protect\rule[\mlLegendHeight]{1.5mm}{\mlLegendThickness}\hspace*{-1mm}
}}
\definecolor{dunkelblau}{rgb}{0.0, 0.2314, 0.6196}%
\definecolor{hellblau}{rgb}{0.000, 0.7451, 1.0000}%
\definecolor{rot}{rgb}{0.6980,0.1333,0.1333}%
\newcommand{\reviewed}[1]{\textcolor{black}{#1}}
\newcommand{\mycomment}[1]{}
\newcommand\copyrighttext{%
  \footnotesize \textcopyright 2012 IEEE. Personal use of this material is permitted.
  Permission from IEEE must be obtained for all other uses, in any current or future 
  media, including reprinting/republishing this material for advertising or promotional 
  purposes, creating new collective works, for resale or redistribution to servers or 
  lists, or reuse of any copyrighted component of this work in other works. 
  DOI: \href{https://doi.org/10.1109/TAC.2025.3589108}{10.1109/TAC.2025.3589108
}}
\newcommand\copyrightnotice{%
\begin{tikzpicture}[remember picture,overlay]
\node[anchor=south,yshift=10pt] at (current page.south) {\fbox{\parbox{\dimexpr\textwidth-\fboxsep-\fboxrule\relax}{\copyrighttext}}};
\end{tikzpicture}%
}
\begin{document}
\copyrightnotice

\title{Stabilization of Predator-Prey Age-Structured Hyperbolic PDE when Harvesting both Species is Inevitable}
\author{Carina Veil, Miroslav Krsti\'c, Iasson Karafyllis, Mamadou Diagne, and Oliver Sawodny%
\thanks{This work was supported by German Research Foundation (Deutsche Forschungsgemeinschaft) under grant SA 847/22-2, project number 327834553. Corresponding author: Mamadou Diagne.}
\thanks{C. Veil is with the Department of Mechanical Engineering, Stanford University, Stanford, CA 94305, USA (e-mail: cveil@stanford.edu).}
\thanks{M. Krsti\'c and M. Diagne are with Department of Mechanical and Aerospace Engineering, University of California San Diego, La Jolla, CA 92093-0411, USA (e-mail: \{krstic, mdiagne\}@ucsd.edu).}
\thanks{I. Karafyllis is with the Department of Mathematics, National Technical University of Athens, Zografou Campus, 15780 Athens, Greece (e-mail: iasonkar@central.ntua.gr).}
\thanks{O. Sawodny is with the Institute for System Dynamics, University of Stuttgart, 70563 Stuttgart, Germany (e-mail: sawodny@isys.uni-stuttgart.de). }}

\maketitle

\begin{abstract}
Populations (in ecology, epidemics, biotechnology, economics, social processes) do not only interact over time but also age over time. It is therefore common to model them as ``age-structured'' partial differential equations (PDEs), where age is the `space variable.' Since the models also involve integrals over age, both in the birth process and in the interaction among species, they  are in fact integro-partial differential equations (IPDEs) with positive states. To regulate the population densities to desired profiles, harvesting is used as input. But non-discriminating harvesting, where wanting to repress one (overpopulated) species will inevitably repress the other (near-extinct) species as well, the positivity restriction on the input (no insertion of population, only removal), and the multiplicative (nonlinear) nature of harvesting, makes control challenging even for ordinary differential equation (ODE) versions of such dynamics, let alone for their IPDE versions, on an infinite-dimensional nonnegative state space. 

With this paper, we introduce a design for a benchmark version of such a problem: a two-population predator-prey setup. 
The model is equivalent to two coupled ODEs, actuated by harvesting which must not drop below zero, and strongly (``exponentially'') disturbed by two autonomous but exponentially stable integral delay equations (IDEs). We develop two control designs. With a modified Volterra-like control Lyapunov function, we design a simple feedback which employs possibly negative harvesting for global stabilization of the ODE model, while guaranteeing regional regulation with positive harvesting. With a more sophisticated, restrained controller we achieve regulation for the ODE model globally, with positive harvesting. For the full IPDE model, with the IDE dynamics acting as large disturbances, for both the simple and saturated feedback laws we provide explicit estimates of the regions of attraction. Simulations illustrate the nonlinear infinite-dimensional solutions under the two feedbacks. The paper charts a new pathway for control designs for infinite-dimensional multi-species dynamics and for nonlinear positive systems with positive controls. 
\end{abstract}

% \begin{IEEEkeywords}
% Enter key words or phrases in alphabetical order, separated by commas. Using the IEEE Thesaurus can help you find the best standardized keywords to fit your article. Use the thesaurus access request form for free access to the IEEE Thesaurus: \underline{https://www.ieee.org/publications/services/thesaurus-acce}\\
% \underline{ss-page.com.}
% \end{IEEEkeywords}

\section{Introduction}
\begin{figure}
    \centering
   \includegraphics[width=0.9\columnwidth]{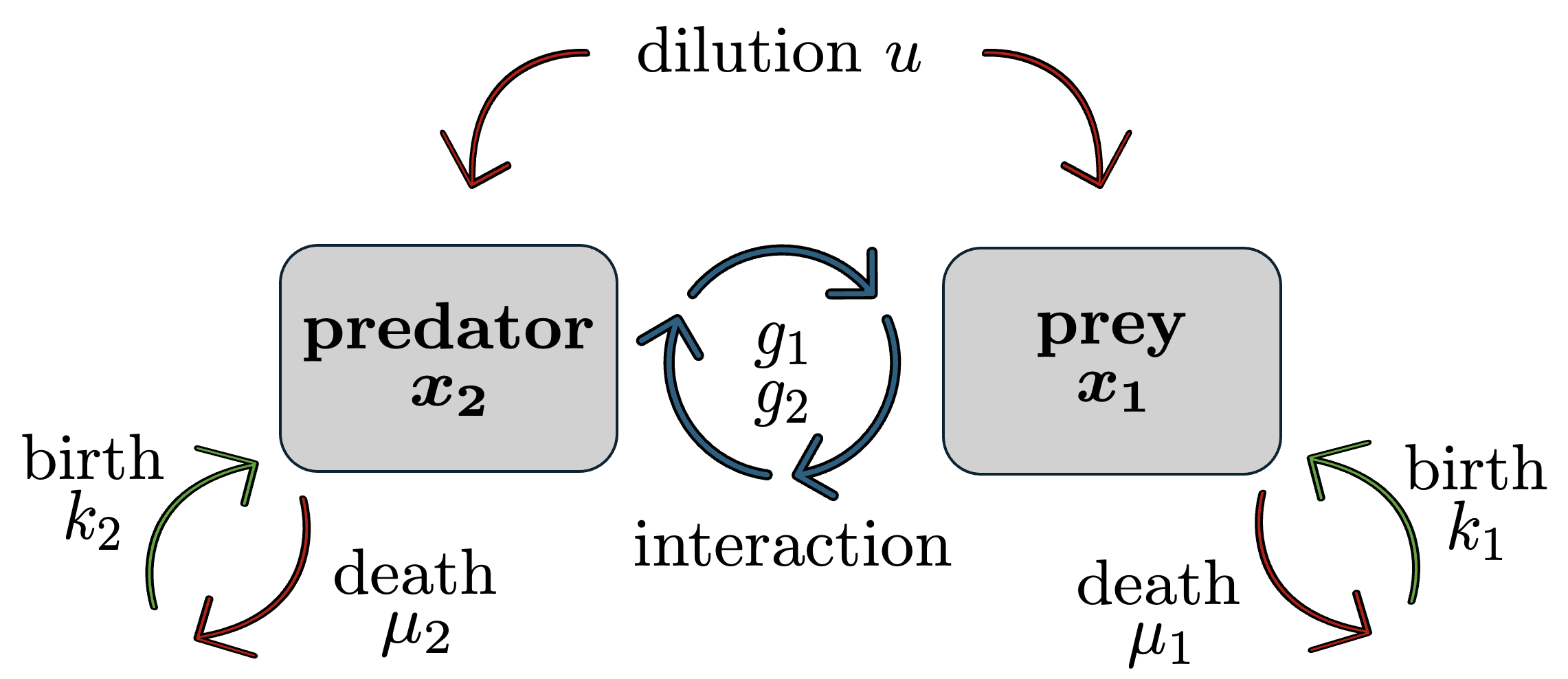}
    \caption{The prey $x_1$ and predator $x_2$ interact via the terms $g_i$. Each species is affected by mortality $\mu_i$ and new population can only enter the system through birth $k_i$. The dilution input $u$ has a repressive effect on both species: harvesting both species is inevitable and represents a challenge for stabilization.}
    \label{fig:ppmodel}
\end{figure}

\IEEEPARstart{T}{o} explain fluctuations of past living populations or predict their future growth, age-structured models serve diverse and rich scientific purposes to describe the evolution of biologically renewable populations over time. 
Structured around age cohorts and involving integrals over age, these models lead to a set of nonlinear integro-partial differential equations (IPDEs) with positive states, describing the dynamics of sub-populations coupled by the law of mass action, a law that governs their contact rate or mixing capabilities \cite{heesterbeek2005law}. Further, the population dynamics of each species is affected by the natural birth and death rates, which may vary depending on age and, potentially, time. 

Many processes in biotechnology, demography or biology exhibit behavior that can be modeled as age-structured population dynamics \cite{inaba2017age, martcheva2015introduction, brauer2012mathematical, rong2007mathematical, gyllenberg1983stability}. 
For instance, the link between demography and economics, the prediction of  the impact of demographic shifts on workforce dynamics, healthcare demand, and social security systems can be achieved by exploiting age-structured population models parameterized by an age- and time-dependent consumption and total value in assets \cite{freiberger2024optimization,tahvonen2009economics}. These models are also exploited to understand the contagion of criminal behaviors in carceral environments and allow to implement age-structured correctional intervention measures \cite{ibrahim2022mathematical, sooknanan2023criminals}, or in other cases, favor the development of more equitable educational strategies when facing a population growth or demographic shifts \cite{lutz2005toward}.

%To improve the performance of bioreactors, t
The behavior of such populations can be studied through chemostat models, where fresh nutrient solution is fed to the biomass-nutrient-mixture at the same rate as it is extracted. To achieve the desired amount of biomass, this rate is used as control input, which makes harvesting and dilution synonyms in this context. 
Whereas single population models are, for example, used to improve the performance of bioreactors for insulin production \cite{smith2013chemostats, baeshen2014cell}, more advanced models describing multiple interacting populations are of special interest for wastewater treatment \cite{canale1969predator, andrews1974dynamic} or protein synthesis, where plasmid-bearing and plasmid-free bacteria compete \cite{li2008competition, xiang2007model}. So far, said processes are mainly modelled with ordinary differential equations (ODEs). Age-structured or spatially structured population models with more than one population are encountered in ecological models or epidemics \cite{holmes1994partial, nhu2018stochastic, nguyen2020analysis}, but their control related studies are limited.  
The main challenge lies in the input to the system being the inevitable harvesting of all species, both overpopulated and underpopulated ones, which poses a challenge for stabilization.

\subsection{Related Work: Control of Population Models}
Generally consisting of a single population, chemostats are very interesting from a control perspective, as they present as nonlinear control problems with input constraints (positive dilution) and inequality state constraints (positive populations) \cite{de2001stabilization}. 
Actuation by dilution in chemostat interacting populations is analogous to the use of pesticides or insecticides in an ecological predator-prey situation \cite{san1974optimal}. Here, for clarity, we use the term ``dilution" to align with terminology from our previous works developed along similar lines. 

Existing studies on nonlinear infinite-dimensional population models stratified by age cohorts often focus on Susceptible-Infected-Recovered (SIR)-like epidemic models and are primarily limited to open-loop stability analysis. These studies typically guarantee asymptotic convergence under restrictive conditions on model parameters \cite{inaba1990threshold,liu2015global}. In addition to the optimal control method  proposed in \cite{albi2021control,boucekkine2011optimal, feichtinger2003optimality}, a relatively old contribution developed a pharmaceutical interventions feedback law, associated to the rate of vaccination to control ``reduced-order" age-structure SIR model for Varicella and Herpes Zoste \cite{Allen}. In recent years, the link of chemostat models to epidemics became particularly prevalent to study the spread of infectious diseases, i.~e., considering the biomass as the infected population and the dilution as the treatment for the disease \cite{smith2013chemostats}. 

The nonlinear infinite-dimensional change of variables introduced in \cite{karafyllis2017stability} has been the basis of most control design and stability analysis of \textbf{single population models}, including the effects of intraspecific competition or actuator dynamics \cite{kurth2023control, haacker2024stabilization}. 
Essentially, the output feedback law contribution from \cite{karafyllis2017stability} stabilizes a single age-structured population system that is neutrally stable in open-loop by introducing self-competition terms in the closed-loop system.
Single population models are, for example, used to control the dilution rate of the chemostat such that the biomass follows certain trajectories in order to maximize the yield of the process \cite{schmidt2018yield, kurth2021tracking, kurth2021optimal, kurth2023control}.

\textbf{Multi-species population models} with age structure, leading to coupled IPDEs, have received little attention in the literature. 
In previous works, we presented a model of two interacting populations in a chemostat with both intra- and interspecific competition terms \cite{kurth2022model}. Key similarities between this work and \cite{kurth2022model} lie in the definition of the contact rate, which is derived from the law of mass action \cite{heesterbeek2005law}. In both cases, the distributed ``reaction term" influences both sub-populations, with mutual interactions between them. Additionally, both models require the inevitable harvesting of both populations as input, which makes stabilization challenging as diluting the entire population may promote the stability of the overpopulated species while possibly having unintendend, detrimental effects on the stability of the underpopulated species.
However, there remains important key distinctions between the two problems:
The IPDE model in \cite{kurth2022model} includes additive self-competition terms that enhance the stability of age-dependent PDE population model and may result open-loop stability of the family of equilibria.
Such stabilizing internal feedback is absent in the present work.
Instead of competition terms, our marginally stable model has skew-symmetric interactions of the two interacting populations in a predator-prey setup, which leads to steady-state oscillations with imaginary open-loop poles.
Last but not least, \cite{kurth2022model} deals with a tracking problem while the present contribution focuses on a setpoint stabilization problem.

\subsection{Contributions}

The first contribution we list here is an obvious one: this is the first set of results on stabilization of age-structured population dynamics with more than one species. Article \cite{kurth2022model} deals with open-loop motion planning for two species but does not pursue stabilization.

% Difficulty of the problem and a quick word on the model
To develop the first stabilizing feedback for multi-species age-structured population dynamics, we face several challenges in the control design and analysis. The system consists of IPDEs, which are nonlinear, must maintain positive states, and the control applied to them must remain positive. Additionally, the dilution actuation, which is not species-specific, is a challenge for control design, as one must either harvest both species or neither, regardless of the imbalance among them. But the greatest of innovation required in the control design, relative to the existing feedback design methods, is that the control must remain positive, namely saturated from below. The limits to global stabilizability of systems with saturated control are well known, including that the open-loop plant must not be exponentially unstable. Our predator-prey model is not exponentially unstable indeed, but it is nothing like feedforward systems for which saturation-based designs exist. The model doesn't even involve any integrators chains---the basis for stabilizability of both feedforward and strict-feedback systems. We introduce both new Control Lyapunov functions (CLFs) and new nonlinear positive feedback laws for these positive infinite-dimensional nonlinear systems.

Our design methodology uses state transformations, as in \cite{karafyllis2017stability}, to convert the nonlinear predator-prey IPDE system into two coupled nonlinear ordinary differential equations (ODEs). 
Unlike the single-species model in \cite{karafyllis2017stability}, the ODE for two interacting species is strongly (``exponentially'') disturbed by two autonomous but exponentially stable integral delay equations (IDEs). 

For such an ODE+IDE system, we present two control designs, with two theorems established for each of the designs. The first control design is a simple $L_gV$ type of a design for the ODE model and for a CLF we construct. Due to its simplicity, the first controller achieves global stability but without a guarantee that its input remains positive for all initial conditions. For the IPDE model to which this first controller is applied, we estimate the region of attraction in this system's infinite-dimensional state space. 

Our second feedback design is more complex, to guarantees positivity of the input. For the ODE model, the second controller is globally stabilizing, whereas for the IPDE model we estimate the origin's region of attraction. 

With these two control designs, and the resulting four theorems, we present a repertoire of possibilities in feedback synthesis and stability analysis for age-structured multi-population models. What we don't achieve, as it is impossible because of the disturbing effect of the IDE dynamics on the ODE model, is global stability for the IPDE model. Region of attraction estimates is the best one can get for stabilization of multi-species age-structured population dynamics.

For the predator-prey system with a dilution input, a small innovation in the Lypunov function development suffices, relative to the existing
% How do we differ from existing Literature? (should this be placed here or in 
Volterra Lyapunov (VL) functions. The VL functions employ exponential and linear functions of the state variables for the study of stability of Lotka-Volterra population models \cite{hofbauer1998evolutionary}. Lyapunov functions of that form also appear in \cite{malisoff2016stabilization}, which is inspired by populations models but not dealing directly with models recognized in the literature as actual population dynamics. Our CLFs in this paper have an appearance of VL functions, but they include crucial modifications, in the form of weighing more strongly the predator state relative to the prey state. The CLFs we design are unusable for the study of (the neutral) stability of the open-loop predator-prey system but they are crucial for enabling our design of an asymptotically stabilizing feedback for the system.

\paragraph*{Added Value Relative to the Paper's Conference Version}
This article is a journal version of our conference submission \cite{veil2025stabilization}, which  contains a small subset of the  results given here.
The journal version's \emph{added value} includes about 9 pages of additional material: more advanced feedback design with positive dilution control (Theorems \ref{thm:global-stabilization-controller} and \ref{thm:Iasson}), as negative dilution amounts to introducing population to the system, is not practically feasible, more detailed proofs,  discussions, and, lastly, more simulations.

\paragraph*{Organization}
This paper is organized as follows: 
Section~\ref{sec:system-model} introduces the model, its steady-states and the system transformation. 
In Section~\ref{sec:control-design}, we consider the ODE system, provide an open-loop stability analysis and present a first CLF based feedback law with unrestricted dilution that globally stabilizes the ODE system, but only ensures positive dilution regionally. In Section~\ref{sec:stability-nonzero-psi}, we establish that this CLF-based control law also locally stabilizes the full ODE-IDE system and provide an explicit estimate of the region of attraction. 
To guarantee positive dilution, we introduce an enhancement of our control law in Section~\ref{sec:stability-positive-dilution}, and show in Section~\ref{sec:regional-stabilization-positive-dilution} that it also locally stabilizes the full ODE-IDE system while the dilution input remains positive at all times. Finally, simulations are presented in Section~\ref{sec:simulations}. Section~\ref{sec:conclusion} concludes this article.

\reviewed{\paragraph*{Notation}
\(\mathcal{K}\) is the class of all strictly increasing functions \( a \in C^0(\mathbb{R}_+;\mathbb{R}_+)\), with \( a(0) = 0 \).
\(\mathcal{KL}\) is the class of functions \( \beta : \mathbb{R}_+ \times \mathbb{R}_+ \to \mathbb{R}_+ \) which satisfy the following: For each \( t \geq 0 \), the mapping \( \beta(\cdot,t) \) is of class \( \mathcal{K} \), and, for each \( s \geq 0 \), the mapping \( \beta(s,\cdot) \) is nonincreasing with \(\lim_{t \to \infty} \beta(s,t) = 0 \) (see References \cite{khalil2014, karafyllis2011stability}). Further, the index $i$ will always refer to $i=1,2$ throughout the manuscript.}

\section{System Model}\label{sec:system-model}
The age-structured predator-prey model with initial conditions (IC) and boundary conditions (BC) is given by
\begin{subequations}
	\begin{alignat}{3}
	&\dxdt{1} + \dxda{1} =&&  
	- \xat{1}  \Bigg[ \mu_1(a) + u(t) \nonumber \\
    &&&+ \int_0^A g_1(\alpha) x_2(\alpha,t) \diff \alpha 
    %\langle g_1(\cdot), x_2(\cdot,t) \rangle 
    \Bigg]  \label{eq:ipde1}\\ 
 	&\dxdt{2} + \dxda{2} =&& 
	- \xat{2}  \Bigg[ \mu_2(a) + u(t) \nonumber\\
    &&& + \frac{1}{\int_0^A g_2(\alpha) x_1(\alpha,t) \diff \alpha} 
    %+ \frac{1}{\langle g_2(\cdot), x_1(\cdot,t)\rangle}
    \Bigg] 
     \label{eq:ipde2}\\ 
	&\text{IC}: \qquad\quad \ \ x_i(a, 0) =&&\  x_{i,0}(a), \\
	&\text{BC}: \qquad\quad \, \ x_i(0,t) =&&\ \int_0^A k_i(a) x_i(a,t) \diff a  \label{eq:bc},
 \end{alignat} \label{eq:karafyllis_system}%
\end{subequations}
where, for $i,j \in \{1,2\}$, $i\neq j$, $x_i(a,t)>0$ is the population density, i.~e. the amount of organisms of a certain age $a \in [0,A]$ of the two interacting populations $x_1(a,t)$ and $x_2(a,t)$ with $(a,t) \in \mathbb{R}_+ \times [0,A]$, their derivatives $\dot{x}_i$ with respect to time and $x'_i$ with respect to age, and the constant maximum age $A>0$. 
The interaction kernels $g_i(a):[0,A]\rightarrow\mathbb{R}_0^+$, the mortality rates $\mu_i(a):[0,A]\rightarrow\mathbb{R}_0^+$, the birth rates $k_i(a):[0,A]\rightarrow\mathbb{R}_0^+$ are functions with $\int_0^A \mu_i(a)\diff a > 0$, $\int_0^A g_i(a)\diff a > 0$, $\int_0^A k_i(a)\diff a >0$. The dilution rate $u(t):\mathbb{R}^+\rightarrow \mathbb{R}_0^+$, is an input affecting both species.

System~\eqref{eq:karafyllis_system} describes a predator-prey population dynamics where  $x_1$ is the prey, $x_2$ is the predator. When $x_2$ is large, $x_1$ is being reduced (\ref{eq:ipde1}), and conversely, 
when $x_1$ is small, $x_2$ is being reduced (\ref{eq:ipde2}).
The choice of the response function $(\int_0^A g_2(\alpha) x_1(\alpha,t) \diff \alpha)^{-1}$ is designed for controlled population systems where alternative survival strategies can be neglected, particularly in biotechnological applications.
Importantly, \eqref{eq:karafyllis_system}  %is unstable and 
exhibits \emph{periodic solutions} and its states are functions of $a\in[0,A]$ with values $x_i$ at time $t$, that belong to the function spaces $\mathcal{F}_i$, $i=1,2$, 
\begin{align}
    \mathcal{F}_i = \Big\{ \xi \in PC^1 ([0,A];(0,\infty)):% \nonumber \\ 
    \xi(0) =\int_0^A k_i(a) \xi(a) \diff a
    \Big\}.
\end{align}
For any subset $\mathcal{S}\subseteq \mathbb{R}$ and for any $A>0$, $PC^1([0,A];\mathcal{S})$ denotes the class of all functions $f\in C^0([0,A];\mathcal{S})$ for which there exists a finite (or empty) set $B\subset(0,A)$ such that: (i) the derivative $x'(a)$ exists at every $a \in (0,A)\backslash B$ and is a continuous function on $(0,A) \backslash B$, (ii) all meaningful right and left limits of $x'(a)$ when $a$ tends to a point in $B \cup \{0,A\}$ exist and are finite.

\subsection{Steady-State Analysis}
In order to determine the equilibria of (\ref{eq:karafyllis_system}), the following lemma is needed.
\begin{lemma}[Lotka-Sharpe condition \cite{sharpe1911problem}]
\label{lem:ls}
The equations
\begin{align}\label{eq:lotkasharpe}
    \int_0^A \tilde{k}_i(a) \diff a = 1, \ i=1,2
\end{align}
with
\begin{align}
    \tilde{k}_i(a) =  k_i(a) e^{-\int_0^a \left(\mu_i(s) + \zeta_i \right)\diff s}
\end{align}
have unique real-valued solutions $\zeta_1(k_1,\mu_1)$ and $\zeta_2(k_2,\mu_2)$, which depend on the birth rates $k_1, k_2$ and mortality rates $\mu_1,\mu_2$.
\end{lemma}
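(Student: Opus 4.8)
The plan is to treat each index $i\in\{1,2\}$ separately, since the two equations in \eqref{eq:lotkasharpe} decouple completely. Because $\zeta_i$ is a constant, the inner integral collapses to $\int_0^a \zeta_i \diff s = \zeta_i a$, so that $\tilde{k}_i(a)=\Phi_i(a)\,e^{-\zeta_i a}$ with the fixed weight
\[
\Phi_i(a) := k_i(a)\, e^{-\int_0^a \mu_i(s)\diff s},
\]
which does not depend on $\zeta_i$. Since $\mu_i\ge 0$ the exponential factor lies in $(0,1]$, so $\Phi_i$ is nonnegative and integrable on $[0,A]$ (bounded above by $k_i$), and since $\int_0^A k_i(a)\diff a>0$ it is strictly positive on a subset of $(0,A]$ of positive measure. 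The Lotka--Sharpe condition for index $i$ thus reduces to finding the real roots of the single-variable map
\[
F_i(\zeta) := \int_0^A \Phi_i(a)\, e^{-\zeta a}\diff a = 1.
\]

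Next I would establish the three analytic properties of $F_i$ that together force existence and uniqueness of a root. First, $F_i$ is continuous (in fact real-analytic) in $\zeta$, the interchange of derivative and integral being justified by dominating $a\,\Phi_i(a)\,e^{-\zeta a}$ by the integrable function $A\,\Phi_i(a)\,e^{RA}$ uniformly for $|\zeta|\le R$. Second, $F_i$ is \emph{strictly decreasing}, because
\[
F_i'(\zeta) = -\int_0^A a\,\Phi_i(a)\, e^{-\zeta a}\diff a < 0 ,
\]
where strict negativity uses precisely the hypothesis $\int_0^A k_i(a)\diff a>0$: the integrand is nonnegative and strictly positive on a positive-measure set bounded away from $a=0$. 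Third, I would pin down the two limits: as $\zeta\to-\infty$ we have $e^{-\zeta a}\to+\infty$ pointwise for every $a>0$, so monotone convergence gives $F_i(\zeta)\to+\infty$; as $\zeta\to+\infty$ the integrand is dominated by $\Phi_i$ and tends to $0$ pointwise on $(0,A]$, so dominated convergence gives $F_i(\zeta)\to 0$.

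With these three facts, the conclusion is immediate: $F_i$ is a continuous, strictly decreasing bijection of $\mathbb{R}$ onto $(0,\infty)$, so the value $1\in(0,\infty)$ is attained at exactly one point $\zeta_i$ by the intermediate value theorem together with injectivity. The dependence $\zeta_i(k_i,\mu_i)$ is inherited through $\Phi_i$. I expect no genuine obstacle; the only points deserving care are the (routine) justifications for differentiating and passing limits under the integral on the compact interval $[0,A]$, and the observation that strict monotonicity---and hence \emph{uniqueness}---genuinely relies on the strict-positivity assumption $\int_0^A k_i(a)\diff a>0$ rather than on mere nonnegativity of $k_i$.
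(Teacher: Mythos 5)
Your proof is correct, but note that the paper itself offers no proof of this lemma: it is imported wholesale from the classical literature (the citation to Sharpe and Lotka, 1911), and the surrounding text (the proof of Proposition~\ref{prop:steady}) merely invokes it. What you have written is a correct, self-contained rendition of the standard argument behind that classical result: reduce the condition to the characteristic equation $F_i(\zeta)=\int_0^A k_i(a)e^{-\int_0^a\mu_i(s)\diff s}e^{-\zeta a}\diff a=1$, then observe that $F_i$ is continuous, strictly decreasing, tends to $+\infty$ as $\zeta\to-\infty$ and to $0$ as $\zeta\to+\infty$, so the level $1$ is hit exactly once. Your handling of the measure-theoretic details is sound: the domination $a\,\Phi_i(a)e^{-\zeta a}\le A\,e^{RA}\Phi_i(a)$ for $|\zeta|\le R$ justifies differentiation under the integral, monotone and dominated convergence give the two limits, and you correctly isolate that strict monotonicity (hence uniqueness) rests on $\int_0^A k_i(a)\diff a>0$ rather than mere nonnegativity of $k_i$. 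One implicit hypothesis worth making explicit: you need $\int_0^a\mu_i(s)\diff s<\infty$ on $[0,A]$ so that the weight $\Phi_i$ is strictly positive wherever $k_i$ is, and so that $\Phi_i$ inherits positive mass from $k_i$; this is harmless in the paper's setting, where the kernels are (piecewise continuous) functions on a compact interval, but it is the one place where an unbounded mortality rate could disturb the argument (only existence at the $\zeta\to-\infty$ end, not uniqueness).
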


\begin{proposition}[Equilibrium]
\label{prop:steady}
The equilibrium  state $(\xstar{1},\xstar{2})$ of the population system (\ref{eq:karafyllis_system}), along with the equilibrium dilution input $u^*$, is given by
\begin{subequations}
\begin{align}
x_i^* (a) &= x_i^*(0)\underbrace{e^{-{\int_0^a (\zeta_i + \mu_i(s)) \diff s}}}_{\tilde x_i^*(a)},\label{eq:ss_profiles}\\%=  x_i^*(0) \xstartildea{i}\\
u^* &= \zeta_1 -  \lambda_2 = \zeta_2 - \frac{1}{\lambda_1}  \in \left(0, \min\left\{\zeta_1 , \zeta_2\right\}\right)\,,
\label{eq:u_star_constraint}
\end{align}
with unique parameters $\zeta_i(k_i,\mu_i)$ resulting from the Lotka-Sharpe condition of Lemma~\ref{lem:ls}, 
\begin{align}
        \lambda_i(u^*) &:=  \int_0^A g_j(a) x_i^*(a) \diff a  > 0 \,,
    \label{eq:def_lambda}
\end{align}
\end{subequations}
and the positive  concentrations of the newborns 
\begin{subequations}
    \begin{align}
    x_1^*(0) &=  \frac{1}{\left(\zeta_2-u^*\right) \int_0^A g_2(a) \tilde{x}^*_1(a) \diff a} >0, \\
    x_2^*(0) &= \frac{\zeta_1-u^*}{\int_0^A g_1(a) \tilde{x}^*_2(a) \diff a}  > 0\,.
    \end{align}
    \label{eq:ic_constraint}
\end{subequations}
\end{proposition}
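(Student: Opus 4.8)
The plan is to compute the equilibria directly from \eqref{eq:karafyllis_system} by setting $\dot x_i\equiv 0$ and exploiting the fact that, at a steady state, the two coupling integrals are \emph{constants} in the age variable. First I would observe that at equilibrium the bracket multiplying $x_1$ in \eqref{eq:ipde1} reduces to $\mu_1(a)+u^*+\lambda_2$ and the bracket multiplying $x_2$ in \eqref{eq:ipde2} reduces to $\mu_2(a)+u^*+1/\lambda_1$, where $\lambda_1,\lambda_2$ are the constants defined in \eqref{eq:def_lambda}. Each steady-state equation is then a scalar linear ODE in $a$ with constant ``augmented mortality,'' which integrates to the exponential profile \eqref{eq:ss_profiles} provided we \emph{identify} $\zeta_1=u^*+\lambda_2$ and $\zeta_2=u^*+1/\lambda_1$; this identification is precisely the pair of relations \eqref{eq:u_star_constraint}.

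Next I would impose the boundary condition \eqref{eq:bc}. Substituting the profile \eqref{eq:ss_profiles} into $x_i^*(0)=\int_0^A k_i(a)\xstar{i}\diff a$ and cancelling the common factor $x_i^*(0)>0$ collapses the boundary condition into $\int_0^A k_i(a)e^{-\int_0^a(\zeta_i+\mu_i(s))\diff s}\diff a=1$, i.e.\ exactly the Lotka--Sharpe condition \eqref{eq:lotkasharpe}. Lemma~\ref{lem:ls} then supplies unique values $\zeta_i(k_i,\mu_i)$, \emph{independent of $u^*$ and of the coupling}. This is the crux of the argument and the step I expect to be the main obstacle: at first glance the definitions look circular, since $\lambda_2$ depends on the profile of $x_2^*$, which depends on $\zeta_2$, which via $u^*$ feeds back into $\lambda_1$ and hence into the profile of $x_1^*$. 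The circularity is broken once one notices that the boundary conditions decouple into two \emph{separate} Lotka--Sharpe equations that pin down $\zeta_1,\zeta_2$ on their own; only afterwards do $u^*$ and the $\lambda_i$ enter.

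With $\zeta_i$ fixed, I would treat $u^*$ as the free parameter. Solving \eqref{eq:u_star_constraint} for the coupling constants gives $\lambda_2=\zeta_1-u^*$ and $\lambda_1=1/(\zeta_2-u^*)$; equating these to the defining integrals $\lambda_1=x_1^*(0)\int_0^A g_2(a)\xstartildea{1}\diff a$ and $\lambda_2=x_2^*(0)\int_0^A g_1(a)\xstartildea{2}\diff a$ and solving for the newborn concentrations yields exactly \eqref{eq:ic_constraint}. A short consistency check confirms that the $\lambda_i$ so produced reproduce the coupling values assumed in the first step, so the construction is self-consistent.

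Finally, for the positivity and range claims I would argue as follows. Since each $g_j$ is nonnegative with $\int_0^A g_j(a)\diff a>0$ and the profiles are strictly positive exponentials, the normalizing integrals in \eqref{eq:ic_constraint} are strictly positive; hence $x_1^*(0)>0$ precisely when $\zeta_2-u^*>0$ and $x_2^*(0)>0$ precisely when $\zeta_1-u^*>0$, which together force $u^*<\min\{\zeta_1,\zeta_2\}$ and then give $\lambda_1,\lambda_2>0$. The relations \eqref{eq:u_star_constraint} re-express this bound as $u^*=\zeta_1-\lambda_2<\zeta_1$ and $u^*=\zeta_2-1/\lambda_1<\zeta_2$. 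The lower bound $u^*>0$ is the admissibility requirement that the dilution input be strictly positive, yielding the stated interval $u^*\in(0,\min\{\zeta_1,\zeta_2\})$ and completing the proof.
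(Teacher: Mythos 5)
Your proposal is correct and follows essentially the same route as the paper's proof: freeze the time dependence so that the dilution and interaction terms merge into the constants $\zeta_i$, solve the resulting linear ODE in age for the exponential profiles, recover the Lotka--Sharpe condition from the boundary condition, and then solve $\zeta_1=u^*+\lambda_2$, $\zeta_2=u^*+1/\lambda_1$ for the newborn concentrations, with positivity forcing $u^*\in\left(0,\min\{\zeta_1,\zeta_2\}\right)$. Your explicit unravelling of the apparent circularity between the $\lambda_i$ and the $\zeta_i$ is a welcome elaboration of a step the paper's terse proof leaves implicit.
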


\begin{remark}\rm
\label{rem:steady}
It is interesting to observe from \eqref{eq:ic_constraint} that, as the equilibrium dilution $u^*$ grows, the equilibrium population of the predators decreases, unsurprisingly, but the equilibrium population of the prey increases. The greater the collective harvesting, the more the prey benefit! 
\hfill$\Box$
\end{remark}

\begin{proof}[Proof of Proposition \ref{prop:steady}.]
Neglecting the time dependence in (\ref{eq:karafyllis_system}) results in the constant dilution rate $u^*$ and interaction terms, which are merged in the parameters $\zeta_i$
\begin{subequations}\label{eq:ss}
\begin{align}
 0 &= -x_i^{*'} (a) - \xstar{i} \left(\mu_i(a) +  \zeta_i \right) =: \mathcal{D}_i^* x_i^*(a) \label{eq:ss_ode} \\
 \zeta_1 &:= u^*  + \int_0^A g_1(a) x_2^*(a) \diff a  = u^* + \lambda_2\\
 \zeta_2 &:= u^* + \frac{1}{\int_0^A g_2(a) x_1^*(a) \diff a} = u^* + \frac{1}{\lambda_1}\\
 x_i^*(0) &= \int_0^A k_i(a) x_i^*(a) \diff a.
\end{align}%
\end{subequations} %
The steady-state profiles (\ref{eq:ss_profiles}) result from solving (\ref{eq:ss_ode}) defined with the differential operator $\mathcal{D}_i^*$ for arbitrary initial conditions $ x_i^*(0)$. Inserting the solutions into the boundary conditions (\ref{eq:bc}) results in the Lotka-Sharpe condition and unique real-valued parameters $\zeta_i$. The definition of $\zeta_i$ contains the same steady-state input $u^*$ for $i \in \{1,2\}$. Equating both conditions (\ref{eq:u_star_constraint}), introducing the parameters $\lambda_i$ (\ref{eq:def_lambda}),
and solving for the initial conditions restricts possible steady-state by constrained initial conditions $x_i^*(0)$ (\ref{eq:ic_constraint}) that ensure a positive steady-state dilution $u^*$.
\end{proof}

\begin{figure*}[t!p]
    \centering
    \begin{minipage}{0.49\textwidth}
    \includegraphics[width=\columnwidth]{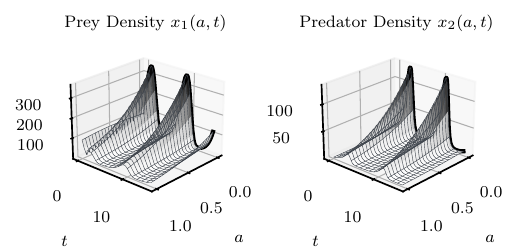}
    \end{minipage}
    \begin{minipage}{0.49\textwidth}
    \includegraphics[width=\columnwidth]{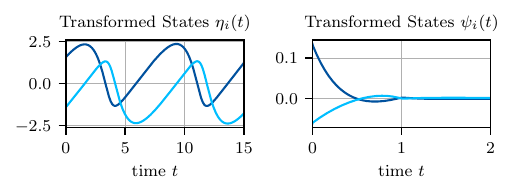}
    \end{minipage}
    \caption{Periodic behavior of the population densities $x_i$ when $u=u^*$, along with the states of the controllable ODE system $\eta_1$ (\mlLineLegend{dunkelblau}), $\eta_2$ (\mlLineLegend{hellblau}) and the autonomous but stable IDEs $\psi_1$ (\mlLineLegend{dunkelblau}), $\psi_2$ (\mlLineLegend{hellblau}). The parameter set (\ref{eq:parameters}) and ICs (\ref{eq:ic-1}) used are specified in Section~\ref{sec:simulations}.}
    \label{fig:simulation-uncontrolled-ic1}
\end{figure*}

\subsection{System Transformation}
In a next step, the PDE system (\ref{eq:karafyllis_system}) is transformed into two coupled ODEs that are actuated by the dilution rate, and two autonomous but exponentially stable IDEs as a basis for designing a stabilizing feedback law.
To get there, the relationship between hyperbolic PDEs and integral delay equations (IDEs) is exploited.

\begin{lemma}\label{lm:solutions}
    Every solution of the population system (\ref{eq:karafyllis_system}) is of the form
    \begin{align}
        x_i(a,t)&=e^{-\int_0^a\mu_i(\alpha)\diff \alpha+\int_{t-a}^t w_{1,i}(\tau)\mathrm{d}\tau}w_{2,i}(t-a), \label{eq:solution-x-ides}
    \end{align}
    and corresponds to a solution of the coupled IDEs
    \begin{subequations}
    \begin{align}
        &w_{1,1} = -u(t)
         \nonumber \\
        &-\int_0^A g_1(\alpha) 
        e^{-\int_0^a \mu_2(s) \diff s 
        + \int_{t-\reviewed{a}}^{\reviewed{a}} w_{1,2}(\reviewed{s})\diff \reviewed{s}}w_{2,2}(t-\reviewed{a}) \diff \reviewed{a}, \\
        &w_{1,2} = -u(t) \nonumber \\
        &- \frac{1}{\int_0^A g_2(\reviewed{a}) e^{-\int_0^a \mu_1(s) \diff s + \int_{t-\reviewed{a}}^{\reviewed{a}} w_{1,1}(\reviewed{s})\diff \reviewed{s}} w_{2,1}(t-\reviewed{a}) \diff \reviewed{a}}, \\
        w_{2,i} &= \int_0^A k_i(\reviewed{a}) e^{-\mu_i(s) \diff s + \int_{t-\reviewed{a}}^{\reviewed{a}} w_{1,i}(\reviewed{s}) \diff \reviewed{s}} w_{2,i}(t-\reviewed{a}) \diff \reviewed{a} \label{eq:ide-v2i}
    \end{align}
    \end{subequations}
    with initial conditions $w_{1,i}(-q)=0$ and $w_{2,i}(q)=e^{\int_0^q \mu_i(s) \diff s }x_{i,0}(q)$ for $q \in [0,A]$.
\end{lemma}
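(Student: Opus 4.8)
The plan is to use the method of characteristics, the standard tool for first-order hyperbolic balance laws such as \eqref{eq:ipde1}--\eqref{eq:ipde2}. The key preliminary observation is that the dilution $u(t)$ and both interaction integrals in \eqref{eq:karafyllis_system} are functions of $t$ \emph{alone}: they carry no dependence on the age variable $a$ over which the transport acts. I would therefore begin by \emph{naming} these age-independent coefficients,
\begin{align}
w_{1,1}(t) &:= -u(t) - \int_0^A g_1(\alpha) x_2(\alpha,t)\diff\alpha, \nonumber\\
w_{1,2}(t) &:= -u(t) - \Big(\int_0^A g_2(\alpha) x_1(\alpha,t)\diff\alpha\Big)^{-1},\nonumber
\end{align}
so that each equation collapses to the scalar transport form $\dot x_i + x_i' = x_i\big(w_{1,i}(t) - \mu_i(a)\big)$. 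The remaining $a$-dependence then sits only in the mortality $\mu_i(a)$, which integrates out cleanly.

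Next I would integrate along the characteristic lines $a-t=\mathrm{const}$, on which $\tfrac{\diff}{\diff t}\,x_i(t+c,t) = x_i\big(w_{1,i}(t) - \mu_i(t+c)\big)$ is a linear scalar ODE solved by an exponential. Two regimes appear. For $a\le t$ the characteristic through $(a,t)$ reaches the boundary $a=0$ at time $t-a$; integrating from $t-a$ to $t$ and substituting $\sigma = s+a-t$ in the mortality term turns $\int_{t-a}^t\mu_i(s+a-t)\diff s$ into $\int_0^a\mu_i(\sigma)\diff\sigma$, yielding exactly
\begin{align}
x_i(a,t) = x_i(0,t-a)\,e^{-\int_0^a\mu_i(\alpha)\diff\alpha + \int_{t-a}^t w_{1,i}(\tau)\diff\tau}.\nonumber
\end{align}
Defining $w_{2,i}(t-a) := x_i(0,t-a)$ produces the claimed representation \eqref{eq:solution-x-ides}. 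For $a>t$ the characteristic instead reaches the initial line $t=0$ at age $a-t$, where $x_{i,0}$ enters; matching this against the representation fixes the history data, namely $w_{1,i}\equiv 0$ and $w_{2,i}$ equal to the exponentially reweighted initial profile on the interval of length $A$ preceding $t=0$, which is the origin of the stated conditions $w_{1,i}=0$ and $w_{2,i}=e^{\int_0^q\mu_i(s)\diff s}x_{i,0}$.

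The system of IDEs is then obtained by \emph{closing the loop}: I would substitute the representation \eqref{eq:solution-x-ides} for $x_2$ (resp.\ $x_1$) back into the definition of $w_{1,1}$ (resp.\ $w_{1,2}$), which reproduces the first two coupled IDEs verbatim, and substitute it into the boundary condition \eqref{eq:bc}, $x_i(0,t)=\int_0^A k_i(a)x_i(a,t)\diff a$, which with $w_{2,i}(t)=x_i(0,t)$ becomes the renewal-type equation \eqref{eq:ide-v2i}. This step is purely algebraic once the representation is in hand.

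The main obstacle is not any single computation but the \emph{regularity bookkeeping} and the self-referential nature of the statement. Because solutions live in $PC^1$, the characteristic integration must be justified across the finite set of ages where $x_i'$ jumps, and special care is needed along the dividing characteristic $a=t$, where the boundary-fed and initial-data regimes meet and where compatibility of \eqref{eq:bc} with the initial profile governs continuity. Moreover, the coefficient $w_{1,i}(t)$ is itself built from the unknown solution, so the argument establishes an \emph{equivalence}---every PDE solution induces a solution of the IDE system---rather than an explicit solve; I would phrase the final verification as checking that the representation together with the two substitutions is self-consistent (a fixed point of the renewal/interaction map) and that positivity of $x_i$, hence finiteness of the reciprocal interaction term for species $2$, is preserved throughout.
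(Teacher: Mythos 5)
Your proposal is correct and takes essentially the same route as the paper: the paper's proof likewise begins by redefining the input as $\tilde u_i(t)$ (the negative of your $w_{1,i}(t)$), which decouples the two species, and then invokes Theorem~3.1 of \cite{karafyllis2014relation} for the hyperbolic-PDE-to-IDE correspondence before resubstituting the original $u$. Your explicit method-of-characteristics integration, the matching of initial data across the characteristic $a=t$, and the substitution into the boundary condition simply unpack the content of that cited theorem, so the two arguments coincide in substance.
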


\begin{proof}
Redefining the input as 
\begin{subequations}
   \begin{align}
    \tilde{u}_1(t) &= u(t) + \int_0^A g_1(a) x_2(a,t) \diff a,\\
    \tilde{u}_2(t) &= u(t) + \frac{1}{\int_0^A g_2(a)x_1(a,t) \diff a}
    \end{align}  
\end{subequations}
decouples the dynamics. Then, Theorem 3.1 from  \cite{karafyllis2014relation} is applied, defining the IDEs in dependence of hyperbolic PDEs with arbitrary parameters. In a last step, the original input is resubstituted.
\end{proof}

Now, this relationship between hyperbolic PDEs and IDEs from Lemma~\ref{lm:solutions} is exploited to split off a two-dimensional ODE from the infinite-dimensional IPDEs (\ref{eq:ipde1}) and (\ref{eq:ipde2}) by a system transformation.
\begin{proposition}[System Transformation]\label{prop:trafo}
    Consider the mapping
    \begin{align}
    \begin{bmatrix}
    \eta_1(t)\\\eta_2(t)\\\psi_1(t-a)\\\psi_2(t-a)
    \end{bmatrix}=
    \begin{bmatrix}
        \mathrm{ln}\left(\Pi_1[x_{1}](t)\right)\\
        \mathrm{ln}\left(\Pi_2[x_{2}](t)\right)\\
        \frac{x_{1}(a,t)}{x^*_1(a)\Pi_1[x_{1}](t)}-1\\
        \frac{x_{2}(a,t)}{x^*_2(a)\Pi_2[x_{2}](t)}-1
    \end{bmatrix}. \label{eq:system-trafo}
\end{align}
defined with the functionals 
\begin{align}
    \Pi_i[x_i](t)
    &= \frac{\int_0^A \pi_{0,i}(a) x_i(a,t)  \diff a}
    {\int_0^A a k_i(a) x_i^*(a) \diff a}, \label{eq:pi-function}
\end{align}
where
\begin{align}
    \pi_{0,i}(a) =\int_a^A k_i(s) e^{{\int_s^a \zeta_i + \mu_i(l) \diff l}} \diff s
\end{align}
are the adjoint eigenfunctions to the zero eigenvalue of the adjoint differential operator \cite{schmidt2018yield} 
\begin{align}
    \mathcal{D}_i^* \pi_{0,i}(a) = \frac{d\pi_{0,i}(a)}{da} - (\mu_i(a) + \zeta_i) \pi_{0,i}(a) + k_i(a) \pi_{0,i}(0).
\end{align}
The transformed variables satisfy the transformed system
\begin{subequations}
\label{eq:transformed-system}
    \begin{align}
        \dot{\eta}_1(t) =& \zeta_1 - u(t) \nonumber \\
        &- e^{\eta_2(t)} \int_0^A g_1(a) \reviewed{x_2^*(a)} \left(1 + \psi_2(t-a)\right) \diff a \label{eq:doteta1} \\
        %- e^{\eta_2(t)} \langle g_1(\cdot), x_2^*(\cdot) \left(1 + \psi_2(t-a)\right) \rangle \label{eq:doteta1} \\
        \dot{\eta}_2(t) =& \zeta_2 - u(t) \nonumber \\
        &- \frac{e^{-\eta_1(t)}}{\int_0^A g_2(a) \reviewed{x_1^*(a)} \left(1 + \psi_1(t-a)\right) \diff a} \label{eq:doteta2} \\
        \psi_i(t) 
        =& \int_0^A \tilde{k}_i(a) \psi_i(t-a) \diff a \label{eq:psi-def}\\
        \eta_i(0) =& \ln \left( \Pi [x_{i,0
        }]\right)=:\eta_{i,0} \label{eq:eta0}\\
        \psi_i(-a) =& \frac{x_{i,0}(a)}{x_i^*(a)\Pi [x_{i,0
        }]} - 1=: \psi_{i,0}(a)\label{eq:psi0}
    \end{align}
\end{subequations}
\reviewed{with $i=1,2$,} and, the unique solution of system (\ref{eq:karafyllis_system}) is given by
\begin{align}
    x_i(a,t) = x_i^*(a)e^{\eta_i(t)}(1+\psi_i(t-a)).\label{eq:solution-x}
\end{align}
\end{proposition}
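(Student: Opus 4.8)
The plan is to prove the two assertions of the proposition in tandem: the representation formula \eqref{eq:solution-x} and the dynamics \eqref{eq:transformed-system}. The conceptual backbone is that $\eta_i$ captures the spatially uniform (``total population'') growth mode, extracted by projecting $x_i$ onto the adjoint eigenfunction $\pi_{0,i}$ through $\Pi_i$, while $\psi_i$ captures the residual \emph{shape} deviation, which I expect to travel undistorted along the characteristics $t-a=\mathrm{const}$ once the growth mode is divided out. Showing that $\psi_i$ is genuinely a function of the single variable $t-a$, and not of $a$ and $t$ separately, is the crux, and I will treat it as the main obstacle.

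First I would derive the $\eta_i$ dynamics. Since the denominator of $\Pi_i$ in \eqref{eq:pi-function} is constant, differentiating $\eta_i=\ln\Pi_i[x_i]$ gives $\dot\eta_i = \big(\int_0^A \pi_{0,i}\dot x_i\,\diff a\big)/\big(\int_0^A \pi_{0,i} x_i\,\diff a\big)$. I would substitute $\dot x_i$ from \eqref{eq:ipde1}--\eqref{eq:ipde2}, integrate the transport term $-\int_0^A \pi_{0,i}x_i'\,\diff a$ by parts, and use $\pi_{0,i}(A)=0$ together with the boundary condition \eqref{eq:bc} to turn the boundary contribution into $\pi_{0,i}(0)\int_0^A k_i x_i\,\diff a$. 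The adjoint identity $\mathcal{D}_i^*\pi_{0,i}=0$, i.e. $\pi_{0,i}'=(\mu_i+\zeta_i)\pi_{0,i}-k_i\pi_{0,i}(0)$, then cancels exactly this birth term and collapses the remaining integrand, yielding $\dot\eta_i=\zeta_i-u-\mathrm{(interaction)}_i$, where the interaction term carries no $a$-dependence and factors out of the projection. Substituting the representation \eqref{eq:solution-x} for the partner species into the interaction integrals produces precisely \eqref{eq:doteta1} and \eqref{eq:doteta2}.

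The heart of the proof is \eqref{eq:solution-x}. Writing $\tilde u_i := u + \mathrm{(interaction)}_i$ (the decoupled input used in the proof of Lemma \ref{lm:solutions}), I would observe that $w_{1,i}=-\tilde u_i$, and that the quotient $x_i(a,t)/x_i^*(a)$ formed from \eqref{eq:solution-x-ides}, after cancelling the factor $e^{-\int_0^a\mu_i\,\diff\alpha}$ against $x_i^*(a)=x_i^*(0)e^{-\int_0^a(\zeta_i+\mu_i)\,\diff s}$, equals $x_i^*(0)^{-1}e^{\zeta_i a + W_i(t)-W_i(t-a)}w_{2,i}(t-a)$ with $W_i(t)=\int_0^t w_{1,i}\,\diff\tau$. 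Identifying the uniform growth factor $G_i(t):=\exp\!\int_0^t(\zeta_i-\tilde u_i)\,\diff s=e^{\zeta_i t+W_i(t)}$, this quotient becomes $G_i(t)\,h_i(t-a)$ with $h_i(t-a)=x_i^*(0)^{-1}e^{-\zeta_i(t-a)-W_i(t-a)}w_{2,i}(t-a)$, a function of $t-a$ alone. It then remains to show $\Pi_i[x_i](t)=G_i(t)$ up to a constant, which is immediate from the $\eta_i$ dynamics above: $\dot\eta_i=\zeta_i-\tilde u_i=\tfrac{d}{dt}\ln G_i$ forces $\Pi_i[x_i](t)=\Pi_i[x_i](0)\,G_i(t)$. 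Hence $1+\psi_i(t-a)=x_i(a,t)/(x_i^*(a)\Pi_i[x_i](t))=h_i(t-a)/\Pi_i[x_i](0)$ depends only on $t-a$, which both legitimizes the notation $\psi_i(\cdot)$ and establishes \eqref{eq:solution-x}.

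Finally I would derive the IDE \eqref{eq:psi-def} and verify the data. Evaluating \eqref{eq:solution-x} at $a=0$ gives $x_i(0,t)=x_i^*(0)e^{\eta_i(t)}(1+\psi_i(t))$, while \eqref{eq:bc} combined with \eqref{eq:solution-x} under the integral gives $x_i(0,t)=e^{\eta_i(t)}\int_0^A k_i(a)x_i^*(a)(1+\psi_i(t-a))\,\diff a$. Dividing by $e^{\eta_i(t)}$ and by $x_i^*(0)$, using $\tilde k_i(a)=k_i(a)x_i^*(a)/x_i^*(0)$ and the Lotka--Sharpe normalization $\int_0^A\tilde k_i\,\diff a=1$ from Lemma \ref{lem:ls}, the constant terms cancel and leave exactly $\psi_i(t)=\int_0^A\tilde k_i(a)\psi_i(t-a)\,\diff a$. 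The initial conditions \eqref{eq:eta0}--\eqref{eq:psi0} follow by evaluating the mapping \eqref{eq:system-trafo} at $t=0$. The step I expect to demand the most care is the well-definedness argument for $\psi_i$: one must confirm that the growth mode extracted by the adjoint projection $\Pi_i$ coincides with the transport growth factor $G_i$, since this is precisely what makes the residual depend on $t-a$ only; the renewal identity just derived (equivalently, $\int_0^A\tilde k_i\,\diff a=1$) is the consistency condition that guarantees this match.
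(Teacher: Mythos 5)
Your proposal is correct and follows essentially the same route as the paper's proof: you compute $\dot\eta_i$ by projecting the IPDE onto the adjoint eigenfunction $\pi_{0,i}$ (your integration by parts with $\pi_{0,i}(A)=0$ and the adjoint identity is exactly the paper's appeal to Green's Lemma, yielding $\dot\eta_i=\zeta_i+w_{1,i}$), and you obtain the renewal equation \eqref{eq:psi-def} from the representation of Lemma~\ref{lm:solutions} — your boundary-condition derivation is the paper's $w_{2,i}$-IDE step in disguise, since that IDE encodes the birth boundary condition and the same Lotka--Sharpe normalization $\int_0^A\tilde k_i\,\diff a=1$. The only substantive difference is presentational: your explicit factorization $x_i(a,t)/x_i^*(a)=G_i(t)\,h_i(t-a)$ together with $\Pi_i[x_i](t)=\Pi_i[x_i](0)\,G_i(t)$ makes the well-definedness of $\psi_i$ as a function of $t-a$ alone more transparent than the paper's terse final substitution, but it is the same computation.
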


\reviewed{Let's now explain the importance of transformation~\eqref{eq:system-trafo} and its significance for the existence/uniqueness of solutions problem for~\eqref{eq:karafyllis_system}. Proposition~\ref{prop:trafo} shows that the existence/uniqueness of solutions problem for~\eqref{eq:karafyllis_system} is equivalent to the existence/uniqueness of solutions problem for~\eqref{eq:transformed-system}. Existence, uniqueness, and basic estimates for the $\psi$-components of the solution, i.e., existence and uniqueness for~\eqref{eq:psi-def}, \eqref{eq:psi0}, is guaranteed by the results contained in Section 4 of the paper \cite{karafyllis2017stability} (which exploits existence/uniqueness results for linear neutral delay equations in \cite{hale2013introduction}). Having obtained in a unique way the $\psi$-components of the solutions for all $t \geq 0$, we can deal with the existence/uniqueness problem for~\eqref{eq:doteta1}, \eqref{eq:doteta2}, and \eqref{eq:eta0}. The ODE system~\eqref{eq:doteta1}, \eqref{eq:doteta2} can be seen as a standard time-varying ODE system (where time variance is a result of the substitution of the $\psi$-components of the solutions). Therefore, we can say that every input $u$ (closed-loop or open-loop) that guarantees existence/uniqueness of solutions for~\eqref{eq:doteta1}, \eqref{eq:doteta2} and $t \geq 0$ also guarantees existence/uniqueness of solutions for~\eqref{eq:karafyllis_system}.}

\begin{proof}[Proof of Proposition~\ref{prop:trafo}.]
    Deriving the states $\eta_i$ with respect to time yields
    \begin{align}
        \dot{\eta}_i(t) = \frac{\dot{\Pi}_i[x_i](t)}{\Pi_i[x_i](t)} = \frac{\int_0^A \pi_{0,i}(a) \dot{x}_i(a, t) \diff a}{\int_0^A \pi_{0,i}(a) x_i(a, t) \diff a}, \label{eq:etadot-eigenfunction}
    \end{align}
 whose numerators can be rewritten as 
 \begin{align}
     \int_0^A \pi_{0,i}(a) \dot{x}_i(a, t) \diff a = \int_0^A \pi_{0,i}(a), x_i(a, t) \diff a \left(w_{1,i}(t) + \zeta_i \right)
 \end{align}
    by applying Green's Lemma. Inserting this result in (\ref{eq:etadot-eigenfunction}) yields
    \begin{align}
        \dot{\eta}_i(t) = \zeta_i + w_{1,i}(t) 
    \end{align}
    and allows for the determination of the states
    \begin{align}
        \eta_i(t) = \eta_{i,0} + \int_0^t \left(\zeta_i + w_{1,i}(\tau) \right)\diff \tau
    \end{align}
    by integration with respect to time and the IC $\eta_{i,0}$. Hence, the IDE (\ref{eq:ide-v2i}) can be expressed as 
    \begin{align}
        w_{2,i}(t) = \int_0^A \tilde{k}_i(a)e^{\reviewed{\int_{t-a}^t }\dot{\eta}_i(\reviewed{s}) \diff \reviewed{s}} w_{2,i}(t-a) \diff a,
    \end{align}
    resulting in 
    \begin{align}
        e^{-\eta_i(t)} w_{2,i}(t) = \int_0^A \tilde{k}_i(a) e^{-\eta_i(t-a)} w_{2,i}(t-a) \diff a. \label{eq:temp-eq-proof}
    \end{align}
    Note that, by definition from (\ref{eq:system-trafo}), the states of the internal dynamics are given by
    \begin{align}
        \psi_i(t) = \frac{e^{-\eta_i(t)} w_{2,i}(t)}{x_i^*(0)} -1. \label{eq:temp-eq-proof2}
    \end{align}
    Plugging in the results from (\ref{eq:temp-eq-proof}) into (\ref{eq:temp-eq-proof2}) yields the proposed IDE (\ref{eq:psi-def}). In a final step, inserting (\ref{eq:temp-eq-proof}) resorted after $w_{2,i}(t)$ in the solutions of the population systems (\ref{eq:solution-x-ides}) from Lemma~\ref{lm:solutions} results in (\ref{eq:solution-x}).
    \end{proof}

A simulation illustrating the periodic behavior of the uncontrolled system in the sense $u=u^*$ is shown in Figure~\ref{fig:simulation-uncontrolled-ic1} in $x$- and $\eta$-, $\psi$-states.

Before we deal with the control design for the transformed system, some helpful properties of the IDEs are stated. The interested reader is referred to \cite{karafyllis2017stability} for further derivations. 
Note that it was proved in \cite{karafyllis2017stability} that the states $\psi_i$ of the internal dynamics are restricted to the sets
\begin{align}
    \mathcal{S}_i= \Big\{ &\psi_i \in C^0([-A,0];(-1,\infty)): \nonumber \\
    &P(\psi_i)=0 \land \psi_i(0) = \int_0^A \tilde{k}_i(a) \psi_i(-a) \diff a \Big\}
\end{align}
where
\begin{align}
    P(\psi_i) = \frac{\int_0^A \psi_i(-a) \int_{a}^A \tilde{k}_i(s) \diff s \diff a}{ \int_0^A a \tilde{k}_i(a) \diff a },
\end{align}
and that the states $\psi_i$ of the internal dynamics are globally exponentially stable in the $\mathcal{L}^{\infty}$ norm, which means that there exist $M_i\geq1$, $\sigma_i\geq0$ such that
\begin{align}
\label{eq:psi-decays}
    \psi_i(t-a) \leq M_i e^{-\sigma_i t}||\psi_{i,0}(a)||_{\infty} 
\end{align}
holds for all $t\geq 0$ and $\psi_{i,0}\in \mathcal{S}_i$ and all $a\in[-A,0]$.

Please note that in the following we drop the argument $(t)$ in the state $\eta_i(t)$, $\psi_i(t)$ and use the notation $\psi_{i,t}:=\psi_i(t-a)$ to denote the "age-history" of $\psi_i$ at certain $t\geq0$ for improved readability.
Furthermore, we introduce the vector states
$\eta:=[\eta_1, \eta_2]$, 
$\psi:=[\psi_1, \psi_2]$, 
%$\psi_t := [\psi_{1,t}, \psi_{2,t}]^\top$ 
for a more concise notation.

\section{Control Design with Unrestricted Dilution}\label{sec:control-design}
As the internal dynamics are stable, instead of (\ref{eq:doteta1}), (\ref{eq:doteta2}), for the sake of control design, we first consider the ODE system with $\psi_i \equiv 0$,
\begin{subequations}\label{eq:ode-system-before-phi}
    \begin{align}
        \dot{\eta}_1 &= \zeta_1 - u - \lambda_2 e^{\eta_2},    \\
        \dot{\eta}_2 &= \zeta_2 - u - \frac{1}{\lambda_1} e^{-\eta_1}.
    \end{align}
\end{subequations}
Using the constraint (\ref{eq:u_star_constraint}) linking the steady-state dilution $u^*$ and the parameters $\zeta_i$, namely, $ u^* =\zeta_1 - \lambda_2 = \zeta_2 - \frac{1}{\lambda_1}$, with the functions
\begin{subequations}
\label{phi-def}
\begin{align}
    \phi_1(\eta_1) &:= \frac{1}{\lambda_1} (1- e^{-\eta_1}) \leq \frac{1}{\lambda_1},\\
%\phi_2(\eta_2)&:=\lambda_2 (1-e^{\eta_2})  \leq \lambda_2 \\
%    \textcolor{red}{
\phi_2(\eta_2)&:=\lambda_2(e^{\eta_2}-1)
%} 
\geq -\lambda_2
\end{align}
\end{subequations}
we write (\ref{eq:ode-system-before-phi}) as
\begin{subequations}
\label{eq:ode-system}
    \begin{align}
        \dot{\eta}_1 &= u^* - u - \phi_2(\eta_2) \\
        \dot{\eta}_2 &= u^* - u + \phi_1(\eta_1).
    \end{align}
\end{subequations}
Since $\phi_1$ is increasing in the prey concentration $\eta_1$, the presence of prey clearly enhances the predator population. In contrast, since $\phi_2$ is decreasing in the predator concentration $\eta_2$, the predator has a repressive effect on the prey population. Additionally, dilution $u$ also has a repressive effect---on both populations. Hence, the dilution and predator work in tandem relative to the prey population, posing a risk of overharvesting. As we shall see, this compels a choice of a feedback law that may take negative dilution values, to compensate for the fact that, when the prey population is depleted, positive dilution may result in overharvesting the prey and, consequently, in the extinction of both populations. 

\subsection{Feedback Linearizing Design}

The system \eqref{eq:ode-system-before-phi} is feedback linearizable and, before we produce our preferred design, inspired by the system's structure, we briefly explore the feedback linearizing option. Additionally, there are multiple backstepping design options, which we leave out of consideration. Taking the change of variables 
\begin{eqnarray}
    y&=&\eta_1-\eta_2\\
    %z&=&\phi_2(\eta_2) - \phi_1(\eta_1)\,,
    z&=&- \phi_1(\eta_1) - \phi_2(\eta_2)\,,
\end{eqnarray}
which is a global diffeomorphism, and feedback
\begin{align}
\label{eq-fbk-lin}
    u(t) = u^* &+ \Big(\lambda_2 e^{\eta_2} + \frac{1}{\lambda_1} e^{-\eta_1}\Big)^{-1} \Big( -k_1 \left(\eta_1-\eta_2\right) \nonumber \\
    &+ k_2 
    \left(\phi_1(\eta_1) + \phi_2(\eta_2)\right)
    + \lambda_2 e^{\eta_2}\phi_1(\eta_1) \nonumber \\
    &- \frac{1}{\lambda_1} e^{-\eta_1}\phi_2(\eta_2) \Big) 
\end{align}
with $k_1,k_2>0$, we arrive at the closed loop system 
\begin{subequations}
   \begin{align}
        \dot{y} =& z \\
        \dot{z}=& -k_1y - k_2z.
    \end{align} 
\end{subequations}
We abandon this approach because \eqref{eq-fbk-lin} is not only complicated but the set in the plane $(\eta_1,\eta_2)$ in which the dilution $u$ remains positive is very complicated.

\subsection{Open-Loop Stability}

Before we proceed to feedback stabilization, we perform open-loop stability analysis. We introduce the functions
\begin{subequations}
\label{Phi-defs}
\begin{align}
    \Phi_1(\eta_1) := \int_0^{\eta_1} \phi_1(\beta) \mathrm{d}\beta 
    &= \frac{1}{\lambda_1}( e^{-\eta_1} - 1+\eta_1) \nonumber \\
    &= -\phi_1(\eta_1)+ \frac{1}{\lambda_1}\eta_1.\\
    \Phi_2(\eta_2) := \int_0^{\eta_2} \phi_2(\beta) \mathrm{d}\beta 
    &= \lambda_2(e^{\eta_2} - 1-\eta_2) \nonumber \\
    &= \phi_2(\eta_2) - \lambda_2\eta_2
\end{align}
\end{subequations}
and note that  $\Phi_1(0)=\Phi_2(0)=0$, as well as that, for $r\neq 0$,
$\Phi_1(r)>0$  and $\Phi_2(r)>0$, and $\lim_{r\to\pm\infty}\Phi_1(r)\to\infty$, $\lim_{r\to\pm\infty}\Phi_2(r)\to\infty$. 
We use these functions as Lyapunov candidates but, before doing so, let us examine $(\Phi_1,\Phi_2)$ in some detail. We note that $\Phi_2$ gives greater weight to the predator surplus (exponential) than to predator deficit (linear), whereas the function $\Phi_1$ gives a greater weight to the prey deficit (exponential) than to prey surplus (linear). Hence both functions grow only linearly (and not exponentially) when predator deficit and prey surplus are exhibited, namely, when the prey is highly advantaged over the predator, namely, in the third quadrant of the $(\eta_1,\eta_2)$-plane. This has consequences on performance, making convergence under feedback harder to achieve when the prey is initially advantaged (fourth quadrant) than when the predator is initially advantaged (second quadrant). Such an asymmetry is probably to be expected since the dilution is an action of only harvesting the population, and the predator being advantaged aids the harvesting, whereas the predator being disadvantaged hampers the harvesting. 
%This has consequences, as we shall see, for achievable performance, and even for global stabilization,  

In the uncontrolled case, $u=u^*$, the resulting system~\eqref{eq:ode-system} becomes
\begin{subequations}
\label{eq:ode-system-uncontrolled}
    \begin{align}
        \dot{\eta}_1 &= -\phi_2(\eta_2) \\
        \dot{\eta}_2 &= \phi_1(\eta_1)
    \end{align}
\end{subequations}
and can be written as
\begin{equation}
    (1-e^{-\eta_1}) d\eta_1 = \lambda_1 \lambda_2 (1-e^{\eta_2}) d\eta_2\,,
\end{equation}
which gives that the quantity
\begin{equation}
\label{Lyapunov-uncontrolled}
    V_0(\eta) = \Phi_1(\eta_1) +  \Phi_2(\eta_2)
\end{equation}
is conserved, namely, $\dot V_0 = 0$, and the solutions $\eta(t)$ are concentric orbits in the plane and satisfy
\begin{align}
    ( e^{-\eta_1(t)} - 1+\eta_1(t)) + \lambda_1\lambda_2 (e^{\eta_2(t)} - 1-\eta_2(t)) \nonumber \\
    = ( e^{-\eta_1(0)} - 1+\eta_1(0)) + \lambda_1\lambda_2 (e^{\eta_2(0)} - 1-\eta_2(0))\,.\label{eq:eta-orbits}
    %\mbox{const.}
\end{align}
However, while $V_0$ is a Lyapunov function in the uncontrolled case, it is not a control Lyapunov function. We see that from the fact that, for \eqref{eq:ode-system}, the derivative of $V_0$ is $\dot V_0 = (\phi_1(\eta_1) \reviewed{+} \phi_2(\eta_2)) (u^*-u)$, which is zero for all $\eta$ for which $\phi_1(\eta_1)= \phi_2(\eta_2)$, namely, for all the prey-predator states $(\eta_1,\eta_2)$ on the curve
%for all the values $\eta_1<0$ of the prey state where there is prey deficit, and for all values of the predator state $\eta_2>0$ where there is a predator surplus given by
\begin{equation}
\label{predator-surplus}
    \eta_2 = \ln\left(1+\frac{e^{-\eta_1} - 1}{\lambda_1\lambda_2}\right)\,,
\end{equation}
which passes through the origin $\eta=0$.
%namely for all states where the prey $\eta_1$ is depleted and there is a predator surplus $\eta_2$ given by \eqref{predator-surplus}. 

Before proceeding to control design, let us note that the Jacobian of \eqref{eq:ode-system-uncontrolled} is $\begin{bmatrix} 0 & -\lambda_2\\ \frac{1}{\lambda_1} & 0 \end{bmatrix}$. As the equilibrium harvesting $u^*$ grows, in accordance with Remark \ref{rem:steady}, we know that the prey thrive, while the predators are diminished, whereas from \eqref{eq:def_lambda} and \eqref{eq:ic_constraint} we know that both $\lambda_2$, the negated sensitivity of prey to predators, and $1/\lambda_1$, the sensitivity of predators to prey, decrease as $u^*$ grows. The eigenvalues of the Jacobian are $\pm j \frac{\lambda_2}{\lambda_1} = \pm j \sqrt{(\zeta_1 - u^*)(\zeta_2 - u^*)}$. Hence, as the equilibrium dilution/harvesting $u^*$ increases, the mutual sensitivities decrease and, as a result, the oscillations slow down. 
%Recall from Remark \ref{rem:steady} that this slowing down of oscillations is around equilibria in which the prey thrive, while the predators are diminished. 
%It is perhaps natural that the cycling slows when there are, on the average, fewer predators consuming more prey. 

\subsection{A CLF Feedback Design}
For the purpose of stabilization, instead of the Lyapunov function \eqref{Lyapunov-uncontrolled}, we propose the (positive definite and radially unbounded) {\em control Lyapunov function} candidate
\begin{align}
    V_1(\eta)=\Phi_1(\eta_1) + (1+\varepsilon) \Phi_2(\eta_2),
\label{eq:lyapunov-ode}
\end{align}
with a positive weight $(1+\varepsilon) \neq 0$ to be determined. The question is, shall we prioritize the predator in this Lyapunov function, by taking $\varepsilon>0$, or the prey, by taking $\varepsilon\in(-1,0)$? The answer to this question is easy, by computing $\dot V_1 = \left(\phi_2+(1+\varepsilon)\phi_1\right) (u^*-u) + \varepsilon \phi_1\phi_2$ and noting than $\dot V_1 = \reviewed{\varepsilon \phi_1\phi_2} = - \varepsilon (1+\varepsilon) \phi_2^2 < 0$ whenever $\phi_1 = -(1+\varepsilon) \phi_2$ and $\varepsilon>0$. 

Hence, we take $\varepsilon>0$ in \eqref{eq:lyapunov-ode}, and along with it, the CLF-based control law 
\begin{align}
    \fbox{$u = u^* + \beta \big(\phi_1(\eta_1) + {(1+\varepsilon)}\phi_2(\eta_2)\big) $}
    \label{eq:control_law}
\end{align}
which is  a gradient (``$L_gV$'') feedback relative to $V_1$ with a positive gain $\beta$, namely, $u = u^* - \beta \left(\frac{\partial V_1}{\partial\eta} \left[\begin{array}{c} -1 \\ -1 \end{array} \right]\right)^{T}$,  and much simpler than the linearizing feedback \eqref{eq-fbk-lin}. The feedback \eqref{eq:control_law} results in the Lyapunov derivative
\begin{align}
    \dot{V}_1(\eta) 
    &= -
    \begin{bmatrix}
        \phi_1 &\phi_2
    \end{bmatrix} 
Q
% \underbrace{\begin{bmatrix}
    %      \beta & \frac{\varepsilon- 2 \beta(1+\varepsilon)}{2}\\
    %     \frac{\varepsilon- 2 \beta(1+\varepsilon)}{2}  & \beta (1+\varepsilon)^2
    % \end{bmatrix}}_{Q}
\begin{bmatrix}
        \phi_1\\ \phi_2
    \end{bmatrix}
    \,.
    \label{eq:definition_q_matrix}
\end{align}
where
\begin{equation}
\label{matrix-Q}
    Q = \begin{bmatrix}
         \beta & \frac{\varepsilon- 2 \beta(1+\varepsilon)}{2}\\
        \frac{\varepsilon- 2 \beta(1+\varepsilon)}{2}  & \beta (1+\varepsilon)^2
    \end{bmatrix}
\end{equation}
%The matrix $Q$ 
is a positive definite matrix for 
\begin{align}
    \varepsilon &> 0 , \label{eq:lower_bound_q}\\
    \beta &> \beta^*(\varepsilon) = \frac{\varepsilon}{4(1+\varepsilon)} \label{eq:lower_bound_beta}\,,
\end{align}
%where $\beta^*(\varepsilon)$ is a positive hyperbola for $\varepsilon>0$. 
and its smaller eigenvalue in that case is
\begin{align}
&\lambda_\mathrm{min}(Q) = \nonumber \\
&\frac{\varepsilon}{2}
\frac{4(1+\varepsilon)\beta-\varepsilon}{\beta(1+(1+\varepsilon)^2) + \sqrt{\beta^2(1+(1+\varepsilon)^2)^2 -\varepsilon (4(1+\varepsilon)\beta-\varepsilon)}} \nonumber \\ &>0\,.\label{eq:lambda_min}
\end{align}
% \carina{\begin{align*}
%     \lambda_{1,2} = \frac{1}{2}\left( \beta \left(1+(1+\varepsilon)^2\right) \pm \sqrt{\beta^2 \left(1+(1+\varepsilon)^2\right)^2 - \varepsilon\left(4\beta(1+\varepsilon)-\varepsilon\right)} \right)
% \end{align*}}
The square root in the denominator is real since its argument is no smaller than $\left(\varepsilon\frac{(1+\varepsilon)^2-1}{(1+\varepsilon)^2+1}\right)^2>0$. Furthermore, if we take, for example, $\beta = \frac{\varepsilon}{2(1+\varepsilon)}$, we get 
simply $\lambda_\mathrm{min}(Q) = \frac{\varepsilon(1+\varepsilon)}{2}$. In the sequel, the quantity $\frac{1}{\gamma_\circ}=\frac{2  \lambda_\mathrm{min}(Q)}{1+\varepsilon}$ will arise in expressions like \eqref{gamma-circ}, \eqref{eq:constraints}, which for this particular choice of $\beta$ becomes simply $\frac{1}{\gamma_\circ} 
%=\frac{2  \lambda_\mathrm{min}(Q)}{1+\varepsilon} 
= \varepsilon$.

Hence, when picking $\varepsilon$ and $\beta$ in accordance with conditions (\ref{eq:lower_bound_q}) and (\ref{eq:lower_bound_beta}), the control law (\ref{eq:control_law}), resulting in the closed loop
\begin{subequations}
    \begin{align}
        \dot{\eta}_1 &= - \beta\phi_1(\eta_1)-(1+\beta {(1+\varepsilon)})\phi_2(\eta_2)   \\
        \dot{\eta}_2 &= -\beta {(1+\varepsilon)}\phi_2(\eta_2) + (1 - \beta) \phi_1(\eta_1),
    \end{align}
\end{subequations}
globally asymptotically stabilizes the origin $\eta=0$ of the ODE system (\ref{eq:ode-system}), in which the $\psi$-dynamics are neglected.

\begin{theorem} \label{thm:global-stabilization-controller-0}
Under the feedback law \eqref{eq:control_law}, the equilibrium $\eta=0$ of the system \eqref{eq:ode-system} is globally asymptotically and locally exponentially stable, while the control signal $u(t)$ remains bounded though not necessarily positive. Furthermore, $u(t)>0$ for all $t\geq 0$ and
\reviewed{for all $\eta(0)$  belonging to a level set $\Omega=\{ \eta \in \mathbb{R}^2 : V_1(\eta)\leq r\}$ for some $r>0$ for which $\Omega$ is a subset of
%for all $\eta(0)$ belonging to the largest level set of $V_1(\eta)$  within 
$\mathcal{D}_0 = \left\{ \eta \in \mathbb{R}^2  \Big|     u^* + \beta (\phi_1(\eta_1) + (1+\varepsilon)\phi_2(\eta_2)) > 0 \right\}$}.
\end{theorem}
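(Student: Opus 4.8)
The plan is to use the candidate $V_1$ from \eqref{eq:lyapunov-ode} as a strict Lyapunov function, leaning entirely on the already-computed identity \eqref{eq:definition_q_matrix}, $\dot V_1 = -[\phi_1\ \phi_2]\,Q\,[\phi_1\ \phi_2]^\top$, with $Q$ in \eqref{matrix-Q} positive definite under \eqref{eq:lower_bound_q}--\eqref{eq:lower_bound_beta}. For global asymptotic stability I would first recall that $V_1$ is positive definite and radially unbounded, then bound $\dot V_1 \le -\lambda_{\min}(Q)(\phi_1^2+\phi_2^2)\le 0$. The decisive point is that $\phi_1(\eta_1)=\frac{1}{\lambda_1}(1-e^{-\eta_1})$ and $\phi_2(\eta_2)=\lambda_2(e^{\eta_2}-1)$ each vanish if and only if their argument vanishes; hence $\dot V_1=0$ only at $\eta=0$ and $\dot V_1<0$ for all $\eta\ne 0$. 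Global asymptotic stability then follows from the radially-unbounded (Barbashin--Krasovskii) version of Lyapunov's theorem, and the monotone decrease of the radially unbounded $V_1$ simultaneously precludes finite escape time, giving global existence of solutions.

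For local exponential stability I would Taylor-expand about the origin using $\Phi_1(\eta_1)\approx\frac{1}{2\lambda_1}\eta_1^2$, $\Phi_2(\eta_2)\approx\frac{\lambda_2}{2}\eta_2^2$ and $\phi_1\approx\frac{1}{\lambda_1}\eta_1$, $\phi_2\approx\lambda_2\eta_2$, so that $V_1$ and $-\dot V_1$ are both locally sandwiched by positive definite quadratic forms. This yields a neighborhood of the origin and a constant $c>0$ with $\dot V_1\le -c\,V_1$, hence exponential decay of $V_1$ and of $|\eta|$. Note that this inequality cannot hold globally: as $\eta_1\to+\infty$ the term $\phi_1$ saturates at $1/\lambda_1$ while $\Phi_1$ keeps growing, so $\phi_1^2+\phi_2^2\ge cV_1$ fails globally---this is precisely why only \emph{local} exponential, but \emph{global} asymptotic, stability is asserted.

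Boundedness of $u$ is immediate: since $\dot V_1\le 0$, every trajectory stays in the compact sublevel set $\{V_1\le V_1(\eta(0))\}$, on which $\eta$, and therefore $\phi_1,\phi_2$ and $u=u^*+\beta(\phi_1+(1+\varepsilon)\phi_2)$, are bounded. For the positivity claim I would argue topologically. The set $\mathcal{D}_0$ is open, being the strict-inequality level set of a continuous function, and it contains the origin because there $\phi_1=\phi_2=0$ and $u^*>0$ by \eqref{eq:u_star_constraint}. Since $V_1$ is continuous and positive definite, its sublevel sets $\Omega=\{V_1\le r\}$ contract to $\{0\}$ as $r\to 0^+$, so some $r>0$ gives $\Omega\subseteq\mathcal{D}_0$. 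Because $\dot V_1\le 0$, $\Omega$ is forward invariant; thus $\eta(0)\in\Omega$ forces $\eta(t)\in\Omega\subseteq\mathcal{D}_0$ for all $t\ge 0$, which is exactly $u(t)>0$ for all $t\ge 0$.

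The only genuinely analytic step is the local exponential estimate, where one must check that $-\dot V_1$ dominates $V_1$ quadratically near the origin; the remaining parts---global asymptotic stability, boundedness, and positivity through forward-invariant sublevel sets---are direct consequences of $Q>0$ and the sign structure of $\phi_1,\phi_2$. The one conceptual subtlety worth flagging is the asymmetry responsible for the loss of global exponential stability, rooted in the saturation of $\phi_1$ for large prey surplus.
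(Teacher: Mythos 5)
Your proposal is correct and follows essentially the same route as the paper, which establishes Theorem~\ref{thm:global-stabilization-controller-0} inline via the identity $\dot V_1=-[\phi_1\ \phi_2]\,Q\,[\phi_1\ \phi_2]^\top$ with $Q\succ 0$ under \eqref{eq:lower_bound_q}--\eqref{eq:lower_bound_beta}, radial unboundedness of $V_1$, and forward invariance of sublevel sets $\Omega\subseteq\mathcal{D}_0$ for dilution positivity. You merely make explicit the details the paper leaves implicit --- that $\phi_i=0$ iff $\eta_i=0$, the local quadratic sandwich yielding $\dot V_1\le -cV_1$ for local exponential stability, and boundedness of $u$ on compact sublevel sets --- all of which are sound.
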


The choice \eqref{eq:lower_bound_q} of $\varepsilon>0$ arises mathematically, but there is also intuition behind it. It makes sense to prioritize the predator $\eta_2$ in the CLF \eqref{eq:lyapunov-ode} because both the dilution control and the predator are {\em harvesters}: for values of $(\eta_1,\eta_2)$ for which the dilution harvesting (of both populations) is unable to affect $\dot V_1$, the predator's harvesting (or prey) is already driving $V_1$ towards zero. 

With \eqref{eq:ode-system-before-phi}, note that the dilution feedback $u(\eta_1,\eta_2)$ in \eqref{eq:definition_q_matrix} is an increasing function of both the prey biomass $\eta_1$ and the predator biomass $\eta_1$. Not favoring either of the two populations makes sense for a controller whose objective is stabilization of such a two-species system in which the natural behavior is a threat to both species.

\section{Stability with Nonzero $\psi$} \label{sec:stability-nonzero-psi}
In the following, we show that the control law (\ref{eq:control_law}) based on the ODE system neglecting the internal dynamics (\ref{eq:ode-system}) stabilizes the full ODE-IDE system (\ref{eq:transformed-system}). First, we introduce the mapping $v_i:\mathcal{S} \rightarrow \mathbb{R}_+$,
\begin{align}
    v_i(\psitminusa{i}) = \ln \left(1 + \int_0^A \bar{g}_j(a) \psi_i(t-a) \diff a\right),
    \label{eq:definition-v-map} %
\end{align}
for $i,j \in \{1,2\}$, $i\neq j$, and 
\begin{align}
    \bar{g}_i(a) & = \frac{g_i(a)\xstar{j}}{\int_0^A g_i(a)\xstar{j} \diff a}, \ \int_0^A \bar{g}_i(a) \diff a = 1.\label{eq:definition-g-bar}
\end{align}
Then (\ref{eq:doteta1}), (\ref{eq:doteta2}) are rewritten as
\begin{subequations}
\begin{align} 
    %\dot{\eta}_i = u^* - u + \phi_j(\eta_j+v_j(\psitminusa{j})).
\label{eq-eta1-perturbed}
   \dot{\eta}_1 &= u^* - u - \phi_2(\eta_2+v_2(\psi_2)),\\
\label{eq-eta2-perturbed}
   \dot{\eta}_2 &= u^* - u + \phi_1(\eta_1+v_1(\psi_1)).
\end{align}
\end{subequations}
With the previously defined control law (\ref{eq:control_law}), the closed loop is
\begin{subequations}
\label{eq:ode-ide-closed-loop}
\begin{align}
    \dot{\eta}_1 &= -\beta\left(\phi_1(\eta_1) + (1+\varepsilon) \phi_2(\eta_2)\right) - \phi_2(\eta_2+v_2(\psi_2)),\\
    \dot{\eta}_2 &= -\beta\left(\phi_1(\eta_1) + (1+\varepsilon) \phi_2(\eta_2)\right) + \phi_1(\eta_1+v_1(\psi_1)),\\
    %\dot{\eta}_i &= -\beta\left(\phi_1(\eta_1) - (1+\varepsilon) \phi_2(\eta_2)\right) + \phi_j\left(\eta_j+v_j(\psitminusa{j})\right) \\
\label{eq:psi_i}
   \psi_i(t) &= \int_0^A  \tilde{k}_i(a) \psi_i(t-a) \diff a,
   \end{align}
\end{subequations}
%with
%\begin{align}
%    \tilde{k}_i(a):= k_i(\alpha) e^{- \int_0^a \left( \mu_i(s) + \zeta_i \right) \diff s},
%\end{align}
which displays the perturbating character of the internal dynamics $\psi_i$.
For better readability in the following, we denote
\begin{alignat}{2}
\label{psihat-defs}
    \phi_i &:= \phi_i(\eta_i), \ \ \ \hat{\phi}_i &:= \phi_i(\eta_i+v_i).
\end{alignat}
Additionally, we need a technical assumption on the birth kernel for the definition of a Lyapunov function $G$~\eqref{eq:g-functional} as used in \cite{karafyllis2017stability}.
%It is satisfied for by arbitrary mortality rates $\mu_i$ for each birth rate $k_i$ with a finite number of zeros on $[0,A]$ and states that members of the population reproduce at almost every age from zero to $A$.
\begin{assumption}[Birth kernel \cite{karafyllis2017stability}]\label{assump:technical_assumption}
There exist constants $\kappa_i>0$ such that
$\int_0^A |\tilde{k}_i(a)- z_i \kappa_i \int_a^A \tilde{k}_i(s) \diff s | \diff a < 1$ with $\reviewed{z_i = (\int_0^A a \tilde{k}_i(a)  \diff a)^{-1}}$. 
\end{assumption}
\reviewed{Assumption~\ref{assump:technical_assumption} is a mild technical assumption, since it is satisfied by arbitrary mortality rate $\mu_i$ for every birth kernel $k_i$ that has a finite number of zeros on $[0, A]$.  
The role of Assumption~\ref{assump:technical_assumption} is crucial for the establishment of the function $G$ used in the CLF. 
Means of verifying the validity of Assumption~1 and detailed discussions are given in \cite{karafyllis2017stability}.}

\reviewed{If Assumption 1 holds then there exist constants $\sigma_i>0$, with $i=1,2$, such that the inequalities $\int_0^A | \tilde{k}_i - z_i \kappa_i \int_a^A \tilde{k}_i(s) \diff s | e^{\sigma_i a} \diff a <1 $ for $i=1,2$ hold.}

Now, to state the main result, we introduce the functionals $G_i$ defined as
\begin{align}
        G_i(\psi_{i}) &: = \frac{\max_{a \in [0,A]} |\psi_i(-a)| e^{\sigma_i ({ A} - a)}}{1 + \min(0, \min_{a \in [0,A]} \psi_i(-a))} %\carina{e^{\sigma_i A}}
        , \label{eq:g-functional}
\end{align}
and recall from \cite{karafyllis2017stability} that their Dini derivatives ($D^+$) satisfy %\carina{(still valid?)}
    \begin{align}
D^+ G_i(\psitminusa{i}) \leq -\sigma_i G_i(\psitminusa{i}) 
%D^+ G_i(\psitminusa{i})^2 \leq -2\sigma_i G_i(\psitminusa{i})^2 
\label{eq:dini_g}
    \end{align}
    along solutions $\psitminusa{i}$ of the IDE for with sufficiently small parameters $\sigma_i>0$. This property follows from Corollary~4.6 and the proof of Lemma~5.1 of \cite{karafyllis2017stability} with 
%    \begin{align}
        $C_i(\psi_{i})=\frac{1}{(1 + \min(0, \min_{a \in [0,A]} \psi_i(-a)))^2}$ and $
        b(s) = s $, %\frac{1}{2} s^2$. 
%    \end{align}
    similar to what is discussed in \cite{haacker2024stabilization}. 
    %, under assumption~\ref{assump:technical_assumption} and , to obtain the functional $G_i(\psitminusa{i})$ as in (\ref{eq:g-functional}), 

\begin{theorem} \label{thm:local-stability}
Let Assumption~\ref{assump:technical_assumption} hold and define $\mathcal{S}=\mathcal{S}_1 \times \mathcal{S}_2$. Consider the closed-loop system (\ref{eq:ode-ide-closed-loop}), i.e., system (\ref{eq:transformed-system}) with the control law (\ref{eq:control_law}), 
on the state space $\mathbb{R}^2 \times S$, which is a subset of the Banach space $\mathbb{R}^2 \times C^0 ([-A,0] ; \mathbb{R}^2)$ with the standard topology. 
Suppose the parameters $\varepsilon$, $\beta$ satisfy the conditions (\ref{eq:lower_bound_q}), (\ref{eq:lower_bound_beta}) and $\lambda_\mathrm{min}(Q) > 0$ is the lowest eigenvalue of the resulting positive definite matrix $Q(\varepsilon,\beta)$ from (\ref{eq:lambda_min}).
Denote $\eta=[\eta_1, \eta_2]^\top$, $\psi = [\psi_{1}, \psi_{2}]^\top$ and the Lyapunov functional 
\begin{align}
        V(\eta, \psi) &= 
        V_1(\eta) + \frac{\gamma_1 }{\sigma_1}
        h(G_1(\psi_{1})) 
        + \frac{\gamma_2 }{\sigma_2}
        %e^{\sigma_2 A}} 
        %\gamma_2 
        h(G_2(\psi_{2})) 
        \label{eq:lyapunov-ode-ide}
\end{align}
with $V_1(\eta)$ from (\ref{eq:lyapunov-ode}),  the positive weights chosen as
\begin{align}
\label{gamma-circ}
    & \gamma_1 > \frac{1}{\lambda_1^2} \gamma_\circ, \quad \gamma_2 > \lambda_2^2\gamma_\circ, 
    \quad \gamma_\circ = 
        \frac{1+\varepsilon }{2 \lambda_\mathrm{min}(Q)}\,,
\end{align}
and the positive definite radially unbounded function $h(\cdot)$ defined as
\begin{align}
\label{h-Lyapunov}
       h(p)&:= %\frac{1}{\sigma} 
       \int_0^p \frac{1}{z} \left(e^z-1\right)^2 \diff z.
\end{align}
Then, 
\begin{enumerate}
    \item
positive invariance holds for all the level sets of $V$ of the form
\begin{align}
\label{eq:Omegac}
    \Omega_c:=\{\eta \in \mathbb{R}^2, \psi \in \mathcal{S} \ |  \ V(\eta, \psi)\leq c\}
    %\subset \mathcal{D}
\end{align} 
that are within the set 
\begin{align}
\label{eq:constraints}
    \mathcal{D} := &\Biggl\{ \eta \in \mathbb{R}^2, \psi \in \mathcal{S} \ 
       \Bigg| \nonumber\\
       & \ \eta_1\geq-\ln\left(\lambda_1\sqrt{ 
       \frac{\gamma_1}{\gamma_\circ}
    %\frac{2 \gamma_1 \lambda_\mathrm{min}(Q)}{1+\varepsilon}
    }\right), 
      % \nonumber\\
      \ \ 
    \eta_2\leq \ln\left(\frac{1}{\lambda_2}\sqrt{
    \frac{\gamma_2}{\gamma_\circ}
    %\frac{2 \gamma_2 \lambda_\mathrm{min}(Q)}{1+\varepsilon}
    }\right), 
    \nonumber\\
    &  \ \ %u= 
    u^* + \beta (\phi_1(\eta_1) + (1+\varepsilon)\phi_2(\eta_2)) >
 0 \Biggr\},
\end{align}
namely, within the set $\mathcal{D}$ where neither the prey deficit nor the predator surplus are  too big but the predator state is bigger than $\eta_2(\eta_1) = \ln\left(1+\frac{e^{-\eta_1} - 1 - \frac{\lambda_1}{\lambda_2}\frac{u^*}{\beta (1+\varepsilon)}}{(1+\varepsilon)\lambda_1\lambda_2}\right)$; 
\item the input $u(t)$ remains positive for all time; 
%\item the equilibrium point $(\eta, \psi)=0$ is stable, and 
\item there exists $\theta_0\in\mathcal{KL}$ such that, for all initial conditions $(\eta_0, \psi_0)$ within the largest $\Omega_c$ contained in $\mathcal{D}$, the following estimate holds, 
%states $\eta$, $\psi$ converge to zero.
\begin{equation}
\label{eq:KL-estimate}
    \left|(\eta(t),G(t)) \right| \leq \theta_0\left(\left|(\eta(0),G(0) \right|, t\right), \quad\forall t\geq 0\,,
\end{equation}
where $G(t) = \left(G_1(\psitminusa{1}),G_2(\psitminusa{2}) \right)$;
\item the equilibrium $\eta=0, \psi=0$ is locally exponentially stable in the norm $\sqrt{\eta_1^2+\eta_2^2} + \|\psi_1\|_\infty+ \|\psi_2\|_\infty$.
\end{enumerate}
% Equivalently, in the original concentration variables $(x_1,x_2)$, 
% \begin{align}
% &\left(\left|
% \left(\Pi_1[x_{1}](t)\right),
% \mathrm{ln}\left(\Pi_2[x_{2}](t)\right),
% \right.\right. \nonumber\\
% & \left.\left.
% \frac{\max_{a \in [0,A]} \left| \frac{x_{1}(a,t)}{x^*_1(a)\Pi_1[x_{1}](t)}-1\right| e^{-\sigma_1 a}}{1 + \min\left(0, \min_{a \in [0,A]}  \frac{x_{1}(a,t)}{x^*_1(a)\Pi_1[x_{1}](t)}-1\right)}, 
% \frac{\max_{a \in [0,A]} \left| \frac{x_{2}(a,t)}{x^*_2(a)\Pi_2[x_{2}](t)}-1\right| e^{-\sigma_2 a}}{1 + \min\left(0, \min_{a \in [0,A]}  \frac{x_{2}(a,t)}{x^*_2(a)\Pi_2[x_{2}](t)}-1\right)}
% \right|\right)
% \nonumber\\
% &
% \leq 
% \nonumber\\
% &
% \theta\left(\left|
% \mathrm{ln}\left(\Pi_1[x_{1}](0)\right),
% \mathrm{ln}\left(\Pi_2[x_{2}](0)\right),
% \right.\right. \nonumber\\
% & \left.\left.
% \frac{\max_{a \in [0,A]} \left| \frac{x_{1}(a,0)}{x^*_1(a)\Pi_1[x_{1}](0)}-1\right| e^{-\sigma_1 a}}{1 + \min\left(0, \min_{a \in [0,A]}  \frac{x_{1}(a,0)}{x^*_1(a)\Pi_1[x_{1}](0)}-1\right)}, 
% \frac{\max_{a \in [0,A]} \left| \frac{x_{2}(a,0)}{x^*_2(a)\Pi_2[x_{2}](0)}-1\right| e^{-\sigma_2 a}}{1 + \min\left(0, \min_{a \in [0,A]}  \frac{x_{2}(a,0)}{x^*_2(a)\Pi_2[x_{2}](0)}-1\right)}
% \right|,t\right)
% \end{align}
% for all $t\geq 0$ and all initial conditions $(x_1(\cdot,0), x_2(\cdot,0))$ %$[\eta_0, \psi_0]$ 
% within the largest $\Omega_c$ contained in $\mathcal{D}$.
\end{theorem}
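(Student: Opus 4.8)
The plan is to exploit the cascade structure of the closed loop~\eqref{eq:ode-ide-closed-loop}: the $\psi$-dynamics~\eqref{eq:psi_i} are autonomous and, by the Dini bound~\eqref{eq:dini_g}, the functionals $G_i$ decay exponentially, so the internal states enter the $\eta$-subsystem only as a vanishing disturbance through the maps $v_i$ of~\eqref{eq:definition-v-map}. I would therefore differentiate the composite functional~\eqref{eq:lyapunov-ode-ide} along solutions. For the ODE part, substituting $u^*-u=-\beta(\phi_1+(1+\varepsilon)\phi_2)$ from~\eqref{eq:control_law} into~\eqref{eq-eta1-perturbed}--\eqref{eq-eta2-perturbed} yields
\begin{align}
\dot V_1 = -\begin{bmatrix}\phi_1 & \phi_2\end{bmatrix} Q \begin{bmatrix}\phi_1\\ \phi_2\end{bmatrix} - \phi_1(\hat{\phi}_2-\phi_2) + (1+\varepsilon)\phi_2(\hat{\phi}_1-\phi_1),\nonumber
\end{align}
where the quadratic form with $Q$ from~\eqref{matrix-Q} is negative definite (hence bounded above by $-\lambda_{\mathrm{min}}(Q)(\phi_1^2+\phi_2^2)$) and the last two terms are the perturbation injected by the internal dynamics. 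For the IDE part, the role of the deliberate choice of $h$ in~\eqref{h-Lyapunov} becomes apparent: since $h'(p)\,p=(e^{p}-1)^2$, the bound~\eqref{eq:dini_g} gives $\tfrac{d}{dt}[\tfrac{\gamma_i}{\sigma_i}h(G_i)]\le-\gamma_i(e^{G_i}-1)^2$, producing exactly the squared quantities needed to dominate the perturbation.

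The crux is estimating the perturbation. Using~\eqref{phi-def} I would write $\hat{\phi}_2-\phi_2=\lambda_2 e^{\eta_2}w_2$ and $\hat{\phi}_1-\phi_1=\tfrac{1}{\lambda_1}e^{-\eta_1}\tfrac{w_1}{1+w_1}$, where $w_i:=\int_0^A\bar{g}_j(a)\psi_i(t-a)\,\diff a=e^{v_i}-1$. Since $\int_0^A\bar{g}_j\,\diff a=1$ and the state space forces $\psi_i>-1$, the denominator $1+w_1\ge 1+\min(0,\min_a\psi_1(t-a))$ stays positive, and comparing with~\eqref{eq:g-functional} (using $e^{\sigma_i(A-a)}\ge1$) gives the clean bounds $|w_2|\le e^{G_2}-1$ and $\bigl|\tfrac{w_1}{1+w_1}\bigr|\le e^{G_1}-1$. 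The two state inequalities defining $\mathcal D$ in~\eqref{eq:constraints}, namely $\eta_1\ge-\ln(\lambda_1\sqrt{\gamma_1/\gamma_\circ})$ and $\eta_2\le\ln(\tfrac{1}{\lambda_2}\sqrt{\gamma_2/\gamma_\circ})$, are precisely what bound the state-dependent coefficients by $\tfrac{1}{\lambda_1}e^{-\eta_1}\le\sqrt{\gamma_1/\gamma_\circ}$ and $\lambda_2 e^{\eta_2}\le\sqrt{\gamma_2/\gamma_\circ}$. Hence the perturbation is dominated by $|\phi_1|\sqrt{\gamma_2/\gamma_\circ}\,(e^{G_2}-1)+(1+\varepsilon)|\phi_2|\sqrt{\gamma_1/\gamma_\circ}\,(e^{G_1}-1)$.

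Combining the three contributions, $\dot V$ is bounded by a quadratic form in $(\phi_1,\phi_2,e^{G_1}-1,e^{G_2}-1)$. I would apply Young's inequality to each cross term, splitting it into a multiple of $\phi_i^2$ absorbed by $\lambda_{\mathrm{min}}(Q)(\phi_1^2+\phi_2^2)$ and a multiple of $(e^{G_i}-1)^2$ absorbed by the $\gamma_i(e^{G_i}-1)^2$ supplied by the IDE terms; the choice $\gamma_\circ=(1+\varepsilon)/(2\lambda_{\mathrm{min}}(Q))$ together with the weight conditions~\eqref{gamma-circ} is exactly what renders this form negative definite, so $\dot V<0$ for $(\eta,\psi)\neq0$ throughout $\mathcal D$. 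This yields part~1: every level set $\Omega_c\subseteq\mathcal D$ is positively invariant. Part~2 follows at once, since the third defining inequality of $\mathcal D$ is precisely $u>0$ under~\eqref{eq:control_law}, and invariance keeps trajectories in $\Omega_c\subseteq\mathcal D$.

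For parts~3 and~4 I would invoke standard Lyapunov-to-$\mathcal{KL}$ machinery. The functional $V$ is sandwiched between class-$\mathcal K_\infty$ functions of $|(\eta,G)|$ (since $V_1$ and $h$ are positive definite and radially unbounded) while $\dot V$ is bounded above by a negative definite function of $|(\eta,G)|$; the comparison lemma then delivers~\eqref{eq:KL-estimate} for some $\theta_0\in\mathcal{KL}$. For local exponential stability, near the origin $h(p)\sim p^2/2$, $V_1$ is quadratic, and $G_i$ is equivalent to $\|\psi_i\|_\infty$, so $V$ is locally comparable to $\eta_1^2+\eta_2^2+\|\psi_1\|_\infty^2+\|\psi_2\|_\infty^2$ and $\dot V\le-cV$ locally, giving exponential decay in the stated norm. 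I expect the main obstacle to be the perturbation-absorption step: deriving the clean $(e^{G_i}-1)$ estimates—in particular controlling $1/(1+w_1)$ through the denominator of $G_1$—and then checking that the weight and domain conditions~\eqref{gamma-circ}, \eqref{eq:constraints} make the final quadratic form negative definite over the admissible range of $\varepsilon$ and $\beta$.
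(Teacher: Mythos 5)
Your proposal is correct and follows essentially the same route as the paper's proof: the same composite Lyapunov functional $V=V_1+\sum_i\frac{\gamma_i}{\sigma_i}h(G_i)$, the same decomposition $\dot V_1=-\phi^\top Q\phi-\phi_1(\hat\phi_2-\phi_2)+(1+\varepsilon)\phi_2(\hat\phi_1-\phi_1)$, the same factorizations \eqref{phitildes-fulldef}, Young absorption against $\lambda_{\mathrm{min}}(Q)$, the state restrictions yielding $H_1,H_2$, and the standard comparison-lemma and local-quadratic arguments for parts 3 and 4. The only cosmetic difference is that you derive the bounds $|e^{v_2}-1|\leq e^{G_2}-1$ and $|1-e^{-v_1}|\leq e^{G_1}-1$ directly from the definition \eqref{eq:g-functional} (via $w_i$ and the denominator $1+\min(0,\min_a\psi_i)$) rather than invoking the inequality $|v_i|\leq G_i$ from \cite{karafyllis2017stability} as the paper does, which makes that step self-contained but does not change the argument.
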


To illustrate this complex interconnection of multiple constraints of $\mathcal{D}$, Figure~\ref{fig:roa-for-simulation} shows level sets of $V_1$ in the $\eta_1$-$\eta_2$-plane, together with the domain $\mathcal{D}$ in gray, i.e., the intersection of the two half planes $\eta_1>-H_1$, $\eta_2<H_2$  and the constraint $u>0$ for the ODE-system (\ref{eq:ode-system}).
The choice of $\varepsilon$ changes the shape of the level sets, whereas the shape of the boundary $u>0$ is influenced by both $\beta$ and $\varepsilon$.

\begin{figure}
    \centering
    \includegraphics{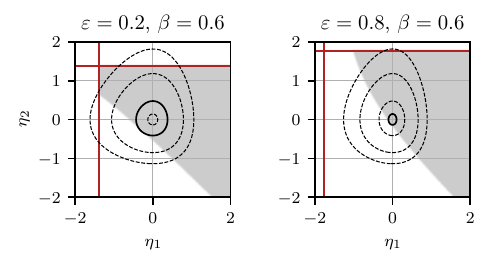}
    \caption{Level sets of $V_1$ (\mlLineLegendDashed{black}) for two different different choices of $\varepsilon$, $\beta$.
    with bounds $H_1$, $H_2$ (\mlLineLegend{rot}) for $\eta_1$, $\eta_2$. The gray area is the set $\mathcal{D}$, namely, the set in which  $\eta_1>-H_1$, $\eta_2<H_2$, and $u>0$. 
    The largest level set of $V_1$ (\mlLineLegend{black}) is contained within the actual region of attraction of $\eta=0$, for the case $\psi=0$. 
    The left plot indicates the results with the control parameters $\varepsilon$, $\beta$ as used in the simulation shown in Section~\ref{sec:simulations}.}
    \label{fig:roa-for-simulation}
\end{figure}

\mycomment{
 Trajectories of the uncontrolled system in which $\psi\equiv 0$ (\ref{eq:ode-system-uncontrolled}) are shown in Figure~\ref{fig:eta-orbits-uncontrolled} for different initial conditions $\eta_0$. The orbits are concentric in the $\eta_1$-$\eta_2$-plane and satisfy (\ref{eq:eta-orbits}). 
\begin{figure}
    \centering
    \includegraphics[]{fig/open-loop-eta-orbits-overpopulated-prey.pdf}
    \caption{Trajectories in the $\eta_1$-$\eta_2$-plane for the uncontrolled system with $u=u^*$ for different initial conditions $\eta_0$ that represent an overpopulated prey. }
    \label{fig:eta-orbits-uncontrolled}
\end{figure}
}

While we design the control law as (\ref{eq:control_law}) and offer its alternative representation as \eqref{eq:control-in-x}, this representation requires  the age-specific measurement of $x_i(a,t)$.
%and the knowledge of the birth kernels $k_i(a)$. This is not realistic. 
Since the controller (\ref{eq:control_law}) uses only $e^{\eta_2}$ and $e^{-\eta_1}$, we recall that, if measurements are available of 
\begin{equation}
    y_i(t) = \int_0^A c_i(a) x_i(a,t) da\,, \quad i=1,2
\end{equation}
with sensor kernels $c_1(a)$ and $c_2(a)$, then we can use the fact that
\begin{equation}
    e^{\eta_i(t)} = \frac{y_i(t)}{y_i^*}\frac{1}{1+\int_0^A p_i(\alpha) \psi_i(t-\alpha)d\alpha}
\end{equation}
where $y_i^* = \int_0^A c_i(a) x_i^*(a) da$ and $p_i(a) = \frac{c_i(a)x_i^*(a)}
{y_i^*}%{\int_0^A c_i(a) x_i^*(a) da}
$, and the fact that $\psi_i(t)$ decays exponentially, in accordance with \eqref{eq:psi-decays}, to approximate $e^{\eta_i(t)}$ as $e^{\eta_i(t)}\approx  \frac{y_i(t)}{y_i^*}$ and approximate the control law  (\ref{eq:control_law}) as
\begin{align}
\label{eq:control-in-y}
    u = & u^* + \beta \left[
    \frac{1}{\lambda_1}
    \left(1 - \frac{y_1^*}{y_1(t)} \right)
 %   \right. \nonumber\\ \left. &
    - (1+\varepsilon)\lambda_2\left(1 - \frac{y_2(t)}{y_2^*}\right)
    \right]\,.
\end{align}
It should be noted though that $y_1^*$ and $y_2^*$ in \eqref{eq:control-in-y} depend on $u^*$ in the following manner,
\begin{align}
y_1^*(u^*) = & \frac{1}{\zeta_2 - u^*} \frac{\int_0^A c_1(a) e^{-\int_0^a\left(\zeta_1+\mu_1(s)\right) \diff s} \diff a}{\int_0^A g_2(a) e^{-\int_0^a\left(\zeta_1+\mu_1(s)\right) \diff s} \diff a}\\
y_2^*(u^*) = & \left(\zeta_1 - u^*\right) \frac{\int_0^A c_2(a) e^{-\int_0^a\left(\zeta_2+\mu_2(s)\right) \diff s} \diff a}{\int_0^A g_1(a) e^{-\int_0^a\left(\zeta_2+\mu_2(s)\right) \diff s} \diff a}\,, 
\end{align}
and require the knowledge of the model functions $\mu_i, k_i, g_i$ and the sensor functions $c_i$. 
We don't prove a theorem under the controller \eqref{eq:control-in-y} because neglecting the decaying term $\int_0^A p_i(\alpha) \psi_i(t-\alpha)d\alpha$ introduces an additional perturbation in the closed-loop system.

\begin{proof}[Proof of Theorem \ref{thm:local-stability}.]
    The existence and uniqueness of the solution $\psitminusa{i} \in \mathcal{S}$ and the property
    \begin{align}
        \inf_{t\geq A} \psi_i(t) \geq \min_{t \in [-A, 0]} \psi_i(t) > -1 \ \forall t \geq 0. \label{eq:psi_lower_bound}
    \end{align}
    is provided in Lemma~4.1 of \cite{karafyllis2017stability}. Further, the initial condition is lower bounded and given by
    \begin{align}
        \psi_{i,0}(a) := \psi_i(-a) = \frac{x_{i,0}(a)}{\xstar{i}\Pi[x_{i,0}]} - 1 > -1,
    \end{align}
    where $x_{i,0}(a)$, $\xstar{i}$, and $\Pi[x_{i,0}]$ from (\ref{eq:pi-function}) are positive on the domain $[0,A]$. With the $\bar{g}_i(a) \geq 0$, $\int_0^A \bar{g}_i(a)\diff a=1$ by definition (\ref{eq:definition-g-bar}), the map $v_i(\psitminusa{i})$ (\ref{eq:definition-v-map}) is well-defined and continuous. Hence, the ODE subsystem of (\ref{eq:ode-ide-closed-loop}) locally admits a unique solution.

    % Definition of Lyapunov Function
Furthermore, from \cite{karafyllis2017stability}, more precisely (A.43), we know that
    \begin{align}
    |v_i(\psitminusa{i})| \leq G_i(\psitminusa{i}) 
    %e^{\sigma_i A}
    \label{eq:v_leq_G}
    \end{align}
   where $v_i:\mathcal{S} \rightarrow \mathbb{R}_+$ is defined in \eqref{eq:definition-v-map}, holds.

    The derivative of $V_1(\eta)$ can be bounded to
    \begin{align}
        \dot{V}_1 
        =&  -\begin{bmatrix}
            \phi_1 \\ \phi_2
        \end{bmatrix}
        Q
        \begin{bmatrix}
            \phi_1 &\phi_2
        \end{bmatrix} 
            + \phi_1\phi_2 - \phi_1 \hat{\phi}_2 \nonumber \\
            &- \left(1+\varepsilon\right)\phi_1\phi_2 + \left(1+\varepsilon\right) \hat{\phi}_1 \phi_2
            \nonumber \\
                 \leq& -\lambda_\mathrm{min}(Q) ||\phi||^2 +
    %\begin{bmatrix}
    %    \phi_1 & q \phi_2
    %\end{bmatrix}
    %\begin{bmatrix}
    %    \hat{\phi}_1 - \phi_1 \\
    %     \hat{\phi}_2 - \phi_2
    %\end{bmatrix} 
    \begin{bmatrix}
        -\phi_1 & (1+\varepsilon) \phi_2
    \end{bmatrix}
    \begin{bmatrix}
        \hat{\phi}_2 - \phi_2 \\
         \hat{\phi}_1 - \phi_1
    \end{bmatrix} 
    \nonumber \ \\
    \leq& -\lambda_\mathrm{min}(Q) ||\phi||^2 + (1+\varepsilon) ||\phi|| \ ||\hat{\phi}-\phi|| 
    \nonumber\\
    \leq& -\lambda_\mathrm{min}(Q) ||\phi||^2
    + \frac{(1+\varepsilon)}{2\varsigma} ||\phi||^2 + \frac{\varsigma}{2} ||\hat{\phi}-\phi||^2 
    \nonumber\\
            \leq& -\frac{\lambda_\mathrm{min}(Q)}{2} ||\phi||^2 
    + \frac{(1+\varepsilon)}{2\lambda_\mathrm{min}(Q)} ||\hat{\phi} - \phi||^2
    \label{eq:vdot_perturbed}
    \end{align}
    with the positive definite matrix $Q$ from (\ref{eq:definition_q_matrix}) for certain $\varepsilon$, $\beta$ as stated in (\ref{eq:lower_bound_q}) and (\ref{eq:lower_bound_beta}), and its lowest real, positive eigenvalue $\lambda_\mathrm{min}(Q)$, as well as using the fact that $\varepsilon>0$ and Young's inequality with $\varsigma=(1+\varepsilon) \lambda_\mathrm{min}(Q)^{-1}$.
    %Combining this results with i
    Inserting the definitions of $\phi_i$ into $\hat{\phi}_i-\phi_i$, we get
    \begin{subequations}
    \label{phitildes-fulldef}
       \begin{align}
               \hat{\phi}_1-\phi_1 &=  
        {\left( \phi_1-\frac{1}{\lambda_1} \right)}{(e^{-v_1}-1)},\\
        \hat{\phi}_2-\phi_2  &= 
        {(\phi_2+\lambda_2 )} {(e^{v_2}-1)}\,.
        \end{align} 
    \end{subequations}
Substituting these expressions into the inequality (\ref{eq:vdot_perturbed}), with the help of \eqref{h-Lyapunov} and \eqref{eq:v_leq_G} the Dini derivative of $V(\eta, \psi$) defined in (\ref{eq:lyapunov-ode-ide}), results in 
\begin{align}
        &D^+ {V} \nonumber \\
        &\leq -\frac{\lambda_\mathrm{min}(Q)}{2} ||\phi||^2 
        + \frac{1+\varepsilon}{2\lambda_\mathrm{min}(Q)} ||\hat{\phi}-\phi||^2 \nonumber \\
        &+ \frac{\gamma_1 }{\sigma_1} D^+ h(G_1)+ \frac{\gamma_2}{\sigma_2}
        D^+ h(G_2)
        \nonumber\\
        &\leq -\frac{\lambda_\mathrm{min}(Q)}{2} ||\phi||^2 
        + \frac{1+\varepsilon}{2\lambda_\mathrm{min}(Q)} \nonumber \\
        &\Big( \left(\lambda_2 + \phi_2\right)^2\left(e^{v_2}-1\right)^2 
        + \left(\frac{1}{\lambda_1} - \phi_1\right)^2\left(e^{-v_1}-1\right)^2
        \Big)  \nonumber\\
        &- \gamma_1  \left(e^{G_1}-1\right)^2 
        -  \gamma_2  \left(e^{G_2}-1\right)^2  
        \nonumber\\
        &\leq  -\frac{\lambda_\mathrm{min}(Q)}{2} ||\phi||^2 \nonumber \\
        &+ \left[\frac{1+\varepsilon}{2\lambda_\mathrm{min}(Q)} \left(\frac{1}{\lambda_1} - \phi_1\right)^2 -\gamma_1  \right] \left(e^{G_1}-1\right)^2
         \nonumber\\
        &
        + \left[\frac{1+\varepsilon}{2\lambda_\mathrm{min}(Q)} 
        \left(\lambda_2 + \phi_2\right)^2  - \gamma_2  \right] \left(e^{G_2}-1\right)^2. \label{eq:temp-mk}
    \end{align} 
Consider for a moment the expressions $\left(\frac{1}{\lambda_1} - \phi_1\right)^2$ and $\left(\lambda_2 + \phi_2\right)^2$ above. Note that they cannot be bounded, respectively, in direct proportion to $\Phi_1$ and $\Phi_2$, since
\begin{equation}
 \lim_{\eta_1\rightarrow -\infty} \frac{\frac{1}{\lambda_1}+\Phi_1(\eta_1)}{\frac{1}{\lambda_1} -\phi_1(\eta_1)}
 = \lim_{\eta_2\rightarrow +\infty} \frac{\lambda_2 +\Phi_2(\eta_2)}{\lambda_2 +\phi_2(\eta_2)}
= 1\,,
\end{equation}
but only quadratically in $\Phi_1$ and $\Phi_2$. This means that one cannot attain a global result by replacing $V_1$ with $\ln(1+V_1)$ in \eqref{eq:lyapunov-ode-ide}. 
So, for \eqref{eq:temp-mk} to be negative definite at least in a region of the state space around the origin,
%and achieve $\dot{V}<0$, 
we restrict the states $\eta$
\begin{subequations} \label{eq:state-constraints}
\begin{align}
    \eta_1 &\geq -\ln\left(\lambda_1\sqrt{\frac{2 \gamma_1 \lambda_\mathrm{min}(Q)}{1+\varepsilon}}\right)=
    :- H_1,\\
    \eta_2 &\leq \ln\left(\frac{1}{\lambda_2}\sqrt{\frac{2 \gamma_2  \lambda_\mathrm{min}(Q)}{1+\varepsilon}}\right)=:H_2.
\end{align}
\end{subequations}
For $H_1$, $H_2$ to be positive, choose
\begin{subequations} \label{eq:state-bounds-def}
    \begin{align}
        \gamma_1 &> \frac{1}{\lambda_1^2}\frac{1+\varepsilon}{2 \lambda_\mathrm{min}(Q)},\\
        \gamma_2 &>  \lambda_2^2 \frac{1+\varepsilon}{2 \lambda_\mathrm{min}(Q)}.
    \end{align}
\end{subequations}
After imposing (\ref{eq:state-bounds-def}), (\ref{eq:temp-mk}) becomes
\begin{align}
\label{eq:Vdot-final}
    D^+ {V} \leq &-\frac{\lambda_\mathrm{min}(Q)}{2} ||\phi||^2 
     - \frac{\gamma_1}{2}  \left(e^{G_1}-1\right)^2 
        - \frac{\gamma_2}{2}\left(e^{G_2}-1\right)^2 \nonumber\\
%    < 0 
     &\forall \quad \eta_1>-H_1, \ \eta_2<H_2.
\end{align}
Since $V$ is radially unbounded, all level sets $\Omega_c:=\{\eta \in \mathbb{R}^2, \psi \in \mathcal{S} | V(\eta, \psi)\leq c\}$ are compact. Hence, there exists a level set $c>0$ such that $\Omega_c \subset \mathcal{D}$, where $\mathcal{D}$ is defined by (\ref{eq:constraints}). Let us now rewrite \eqref{eq:lyapunov-ode-ide}, \eqref{eq:Vdot-final}, \eqref{eq:lyapunov-ode} as
\begin{align}
\label{eq:Vdot-final+}
&D^+ V(\eta,G) 
\nonumber\\
& = %\frac{d}{dt} 
D^+ \left[ 
%V_1(\eta)
\Phi_1(\eta_1) + (1+\varepsilon) \Phi_2(\eta_2),
+ \frac{\gamma_1 }{\sigma_1}
%e^{\sigma_1 A}} 
h(G_1) 
        + \frac{\gamma_2 }{\sigma_2}
%e^{\sigma_2 A}} 
        h(G_2) \right]
\nonumber\\
&\leq -\frac{\lambda_\mathrm{min}(Q)}{2} ||\phi(\eta)||^2 
     - \frac{\gamma_1}{2}  \left(e^{G_1}-1\right)^2 
%    \nonumber\\ &
        - \frac{\gamma_2}{2}    \left(e^{G_2}-1\right)^2 
        \nonumber\\
        & =: - W(\eta,G), \qquad G=(G_1,G_2)
\end{align}
for all $\eta_1>-H_1,  \eta_2<H_2$. Since $V_1$ and $W$ are positive definite in $(\eta,G)$, there exists $\theta_0\in\mathcal{KL}$ such that, for all initial conditions $(\eta_0, G_0)$ within the largest $\Omega_c$ contained in $\mathcal{D}$, the estimate \eqref{eq:KL-estimate} holds. 
Finally, since \eqref{eq:lyapunov-ode} with \eqref{Phi-defs} is locally quadratic in $\eta$, and since for all $r\in(0,1)$ and for all $\psi_i\in\mathcal{S}$ with $\|\psi_i\|_\infty \leq r$ it holds, using \eqref{eq:g-functional}, that
\begin{equation}
\|\psi_i\|_\infty \leq G_i(\psi_i) \leq \frac{{\rm e}^{\sigma_i A}}{1-r} \|\psi_i\|_\infty \,,
\end{equation}
the local asymptotic stability of $\eta=0, \psi=0$ in the norm $\sqrt{\eta_1^2+\eta_2^2} + \|\psi_1\|_\infty+ \|\psi_2\|_\infty$ follows. The exponential nature of stability follows from a careful inspection of \eqref{eq:Vdot-final+}, along with the definitions of $V, \phi_1,\phi_2, G_1, G_2$. 
This completes the proof of Theorem~\ref{thm:local-stability}.
\end{proof}

\begin{corollary}
There exists $\theta\in\mathcal{KL}$ such that, under the restrictions on the initial conditions in Theorem~\ref{thm:local-stability} given by \eqref{eq:constraints}, \eqref{eq:Omegac}, \eqref{eq:lyapunov-ode-ide} but understood in the sense of the transformations $(x_1,x_2)\mapsto (\eta_1,\eta_2,\psi_1,\psi_2)$ defined by \eqref{eq:system-trafo} applied to the initial conditions $x_i(\cdot,0)\in\mathcal{F}_i$, the control law (\ref{eq:control_law}), given in the biomass concentration variables $(x_1,x_2)$ as
\begin{align}
\label{eq:control-in-x}
    u = & u^* + \beta \Bigg[
    \frac{1}{\lambda_2}
    \left(1 - \frac{\int_0^A a k_1(a) x_1^*(a) \diff a}{\int_0^A \pi_{0,1}(a) x_1(a,t)  \diff a} \right) \nonumber \\
    &- \lambda_1\left(1 - \frac{\int_0^A \pi_{0,2}(a) x_2(a,t)  \diff a}
    {\int_0^A a k_2(a) x_2^*(a) \diff a} \right)
    \Bigg]\,,
\end{align}
guarantees the following regional asymptotic stability estimate:
\begin{align}
\label{eq:final-KL-estimate}
\max_{a\in[0,A]}\left| \ln \frac{x_i(a,t)}{x_i^*(a)}\right| \leq \theta\left( \max_{a\in[0,A]}\left| \ln \frac{x_i(a,0)}{x_i^*(a)}\right|, t \right)\,, \ \forall t\geq 0\,.
\end{align}
\end{corollary}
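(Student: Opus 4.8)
The plan is to pull back the finite-dimensional $\mathcal{KL}$ estimate \eqref{eq:KL-estimate} of Theorem~\ref{thm:local-stability}, stated in the coordinates $(\eta,G)$, to the logarithmic biomass deviation $\ln(x_i/x_i^*)$ through the solution representation \eqref{eq:solution-x}. First I would observe that the feedback \eqref{eq:control-in-x} is nothing but \eqref{eq:control_law} re-expressed in the $x$-variables via $e^{\eta_i}=\Pi_i[x_i]$ from \eqref{eq:pi-function}, so that the closed loop in $(\eta,\psi)$ coordinates is exactly \eqref{eq:ode-ide-closed-loop} and Theorem~\ref{thm:local-stability} applies verbatim for all initial data whose image under \eqref{eq:system-trafo} lies in the largest level set $\Omega_c$ (see \eqref{eq:Omegac}) contained in $\mathcal{D}$. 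Taking logarithms in \eqref{eq:solution-x} gives the pointwise identity $\ln\tfrac{x_i(a,t)}{x_i^*(a)}=\eta_i(t)+\ln\bigl(1+\psi_i(t-a)\bigr)$, which is the bridge between the two coordinate systems.

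The core of the proof is then a two-sided equivalence between $m_i(t):=\max_{a\in[0,A]}\lvert\ln\tfrac{x_i(a,t)}{x_i^*(a)}\rvert$ and $\lvert(\eta(t),G(t))\rvert$, uniform over the invariant set $\Omega_c$. For the upper bound I would first exploit that on $\Omega_c$ the Lyapunov bound $V\le c$ forces $G_i\le\bar G_i:=h^{-1}(\sigma_i c/\gamma_i)$; because the denominator $1+\min(0,\min_a\psi_i(-a))$ appears in \eqref{eq:g-functional}, this in turn yields a uniform lower bound $1+\psi_i(t-a)\ge\delta:=(1+\max_i\bar G_i)^{-1}>0$ for all $t\ge0$ (consistent with \eqref{eq:psi_lower_bound}). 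Thus $\psi\mapsto\ln(1+\psi)$ is bi-Lipschitz on the relevant range, giving $\lvert\ln(1+\psi_i(t-a))\rvert\le\delta^{-1}\lVert\psi_{i,t}\rVert_\infty\le\delta^{-1}G_i$, and the identity above yields $m_i(t)\le\lvert\eta_i(t)\rvert+\delta^{-1}G_i$. Hence $\max_i m_i(t)\le\rho_1\bigl(\lvert(\eta(t),G(t))\rvert\bigr)$ for a (in fact linear) $\rho_1\in\mathcal K$.

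The subtler direction, and the step I expect to be the main obstacle, is bounding the initial data $\lvert(\eta(0),G(0))\rvert$ by a class-$\mathcal K$ function of $\max_i m_i(0)$, i.e.\ a local Lipschitz inverse of the nonlinear transformation \eqref{eq:system-trafo}. Here the decisive structural fact is that $e^{\eta_{i,0}}=\Pi_i[x_{i,0}]$ is a convex combination of the ratios $x_{i,0}(a)/x_i^*(a)$: since $\pi_{0,i}\ge0$ and the normalization in \eqref{eq:pi-function} is chosen so that $\Pi_i[x_i^*]=1$, the weights $\pi_{0,i}(a)x_i^*(a)/\!\int_0^A a k_i(a)x_i^*(a)\diff a$ are nonnegative and integrate to one. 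Because $x_{i,0}(a)/x_i^*(a)\in[e^{-m_i(0)},e^{m_i(0)}]$, this immediately gives $\lvert\eta_{i,0}\rvert\le m_i(0)$, so that $1+\psi_{i,0}(a)=\exp\bigl(\ln\tfrac{x_{i,0}(a)}{x_i^*(a)}-\eta_{i,0}\bigr)$ has exponent in $[-2m_i(0),2m_i(0)]$, whence $\lVert\psi_{i,0}\rVert_\infty\le C\,m_i(0)$ and, through \eqref{eq:g-functional}, $G_i(\psi_{i,0})\le C'\,m_i(0)$ on the region. This furnishes $\lvert(\eta(0),G(0))\rvert\le\rho_2(\max_i m_i(0))$ with $\rho_2\in\mathcal K$.

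Finally I would compose the three estimates: applying \eqref{eq:KL-estimate} and chaining gives $\max_i m_i(t)\le\rho_1\bigl(\theta_0(\rho_2(\max_i m_i(0)),t)\bigr)$, and setting $\theta(s,t):=\rho_1(\theta_0(\rho_2(s),t))$, which is of class $\mathcal{KL}$ as a composition of class-$\mathcal K$ maps with a class-$\mathcal{KL}$ map, yields \eqref{eq:final-KL-estimate} (per species, since $m_i(t)\le\max_j m_j(t)$). The only genuinely delicate point throughout is the uniform, region-dependent invertibility of \eqref{eq:system-trafo} near equilibrium; once the convex-combination property of $\Pi_i$ and the uniform lower bound $1+\psi_i\ge\delta$ are in hand, everything else is routine class-$\mathcal K$ bookkeeping.
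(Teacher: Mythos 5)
Your proposal is correct and shares the paper's overall architecture---sandwich $\max_{a}|\ln(x_i(a,t)/x_i^*(a))|$ between class-$\mathcal{K}$ functions of $|(\eta,G)|$ and compose with the $\mathcal{KL}$ estimate \eqref{eq:KL-estimate}, ending with exactly the paper's $\theta(r,t)=\bar\theta(\theta_0(\underline\theta(r),t))$---but it differs in that the paper does not prove the two bridging inequalities at all: it simply cites them as inequalities (5.24) and (5.27) of \cite{karafyllis2017stability}, whereas you derive them from scratch. Your self-contained derivation checks out on both sides. For the ``hard'' direction, the normalization $\Pi_i[x_i^*]=1$ that you invoke is indeed true and is the crux: since $\pi_{0,i}(a)x_i^*(a)=x_i^*(0)\int_a^A \tilde k_i(s)\diff s$, Fubini gives $\int_0^A \pi_{0,i}(a)x_i^*(a)\diff a = x_i^*(0)\int_0^A a\,\tilde k_i(a)\diff a = \int_0^A a\,k_i(a)x_i^*(a)\diff a$, so $e^{\eta_{i,0}}$ is a convex average of $x_{i,0}(a)/x_i^*(a)$ and $|\eta_{i,0}|\le m_i(0)$ follows, then $\|\psi_{i,0}\|_\infty \le e^{2m_i(0)}-1$ and $G_i(\psi_{i,0})\le e^{\sigma_i A}e^{2m_i(0)}(e^{2m_i(0)}-1)$, a class-$\mathcal{K}$ bound (your linear constants $C,C'$ are fine on the compact region but unnecessary). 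For the other direction, your lower bound $1+\psi_i \ge (1+\bar G_i)^{-1}$ is justified because, with $m:=-\min(0,\min_a\psi_i(-a))$, the definition \eqref{eq:g-functional} yields $G_i \ge m/(1-m)$, hence $1-m\ge (1+G_i)^{-1}$, and $G_i(\psi_{i,t})$ is bounded along solutions in $\Omega_c$ (by level-set invariance, or directly by \eqref{eq:dini_g}); this makes $|\ln(1+\psi_i)|\le \delta^{-1}\|\psi_{i,t}\|_\infty\le\delta^{-1}G_i$ legitimate. What your route buys is a proof that does not lean on the single-species estimates of \cite{karafyllis2017stability}; what the paper's route buys is brevity. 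Two minor remarks: first, \eqref{eq:control-in-x} as printed in the corollary has $\lambda_1$ and $\lambda_2$ interchanged and omits the factor $(1+\varepsilon)$ relative to \eqref{eq:control_law} expressed via $e^{\eta_i}=\Pi_i[x_i]$ (compare \eqref{eq:control-in-y}); your tacit reading of it as \eqref{eq:control_law} in $x$-coordinates is the intended one. Second, your chaining actually delivers $m_i(t)\le\theta\bigl(\max_j m_j(0),t\bigr)$ rather than $m_i(t)\le\theta(m_i(0),t)$---since the species are coupled, the right-hand side of \eqref{eq:final-KL-estimate} must be read with the maximum over both species' initial data, a notational looseness your proof shares with the paper's.
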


\begin{proof}
It was established in the inequalities (5.24) and (5.27) in \cite{karafyllis2017stability} that there exist $\bar\theta, \underline\theta \in \mathcal{K}$ such that 
\begin{eqnarray}
\max_{a\in[0,A]}\left| \ln \frac{x_i(a,t)}{x_i^*(a)}\right| &\leq& \bar\theta(|(\eta(t), G(t))|)\,, \ \forall t\geq 0
\\
|(\eta(0), G(0))| &\leq & \underline\theta \left( \max_{a\in[0,A]}\left| \ln \frac{x_i(a,0)}{x_i^*(a)}\right| \right)\,.
\end{eqnarray}
Combining these two inequalities with \eqref{eq:KL-estimate}, the estimate \eqref{eq:final-KL-estimate} follows immediately, with $\theta(r,t) = \bar\theta\left(\theta_0\left(\underline\theta(r),t\right)\right)$.
\end{proof}

\section{%Global 
Global Stabilization with Positive Dilution}\label{sec:stability-positive-dilution}

Up to the present section, we focused on the global stabilization of the reduced model \eqref{eq:ode-system}, namely, for all initial conditions $\eta(0)\in\mathbb{R}^2$, 
but without ensuring the that dilution $u(t)$ remains positive for all initial conditions. Dilution values, i.e., harvesting rates, that take negative values amount to introducing (externally ``farmed'') populations, which is unrealistic, especially when such an injection of populations needs to be in proportion to the current density of both predator and prey at each respective age. 

In this section we return to the reduced model \eqref{eq:ode-system}, with $\phi_1, \phi_2$ defined in \eqref{phi-def}, and define a globally stabilizing feedback law with positive $u(t)$.

\begin{theorem} \label{thm:global-stabilization-controller}
Under the feedback law
\begin{equation}
    u = u^* + \varepsilon\phi_2(\eta_2) + \beta \frac{\varphi(\eta)}{\sqrt{\delta^2+ \left(\min(0, \varphi(\eta)\right)^2}} 
    \label{eq:global-stabilization-controller}
\end{equation}
where 
\begin{equation}
\label{gain-cond}
    \varphi(\eta) = \phi_1(\eta_1) + \left(1+\varepsilon\right) \phi_2(\eta_2)\,,
\end{equation}
$\delta>0$ is arbitrary, and, for a given dilution setpoint $u^*>0$, the feedback gains $\varepsilon >0$ and $\beta\geq 0$ are selected so that
\begin{equation}
    \varepsilon\lambda_2 + \beta < u^*\,, \label{eq:saturated-control-constraints}
\end{equation}
the origin $\eta=0$ of the system \eqref{eq:ode-system} is globally asymptotically stable, locally exponentially stable, and, furthermore, the dilution input $u(t)$ defined in \eqref{eq:global-stabilization-controller} remains positive for all $t\geq 0$. 
\end{theorem}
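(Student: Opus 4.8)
The plan is to reuse the control Lyapunov function $V_1(\eta)=\Phi_1(\eta_1)+(1+\varepsilon)\Phi_2(\eta_2)$ from \eqref{eq:lyapunov-ode}, exactly as in Theorem~\ref{thm:global-stabilization-controller-0}, and to exploit the fact that the extra summand $\varepsilon\phi_2(\eta_2)$ in the feedback \eqref{eq:global-stabilization-controller} is engineered precisely to annihilate the indefinite cross term $\varepsilon\phi_1\phi_2$ appearing in $\dot V_1$. Recall from the CLF computation preceding \eqref{eq:control_law} that, along \eqref{eq:ode-system}, one has $\dot V_1=\varphi(\eta)\,(u^*-u)+\varepsilon\phi_1\phi_2$, with $\varphi=\phi_1+(1+\varepsilon)\phi_2$. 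First I would substitute $u^*-u=-\varepsilon\phi_2-\beta\,\varphi/\sqrt{\delta^2+(\min(0,\varphi))^2}$ and use $\phi_2\varphi=\phi_1\phi_2+(1+\varepsilon)\phi_2^2$ to obtain the exact cancellation
\begin{equation}
\dot V_1=-\varepsilon(1+\varepsilon)\phi_2^2-\beta\,\frac{\varphi^2}{\sqrt{\delta^2+(\min(0,\varphi))^2}}\le 0,
\end{equation}
since $\varepsilon>0$ and $\beta\ge 0$. This cancellation is the crux of the whole argument, and I expect it to be the quickest step once the algebra is arranged correctly.

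Second, I would establish global asymptotic stability from this sign. For $\beta>0$ the right-hand side vanishes only when simultaneously $\phi_2=0$ and $\varphi=0$; since $\phi_2=0$ forces $\eta_2=0$ and then $\varphi=\phi_1$, this yields $\phi_1=0$, i.e. $\eta=0$, so $\dot V_1$ is strictly negative off the origin, and global asymptotic stability follows from radial unboundedness of $V_1$. For the degenerate case $\beta=0$ the set $\{\dot V_1=0\}$ is the whole line $\{\eta_2=0\}$, so here I would invoke LaSalle's invariance principle: on $\{\eta_2=0\}$ one has $u=u^*$ and $\dot\eta_2=\phi_1(\eta_1)$, whose vanishing forces $\eta_1=0$, so the largest invariant set is again $\{\eta=0\}$.

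Third, for local exponential stability I would note that the feedback is $C^1$ (the map $\varphi\mapsto\varphi/\sqrt{\delta^2+(\min(0,\varphi))^2}$ is continuously differentiable at $\varphi=0$ with derivative $1/\delta$, thanks to $\delta>0$), linearize the closed loop at $\eta=0$ using $\phi_1'(0)=1/\lambda_1$ and $\phi_2'(0)=\lambda_2$, and check that the Jacobian has negative trace and positive determinant $\tfrac{\lambda_2}{\lambda_1}\!\left(1+\varepsilon+\varepsilon\beta/\delta\right)>0$, hence is Hurwitz. Finally, positivity of $u$ follows from two pointwise bounds: $\phi_2\ge-\lambda_2$ from \eqref{phi-def} and $\varphi/\sqrt{\delta^2+(\min(0,\varphi))^2}>-1$ (the saturation acts only on the negative branch), which together give $u>u^*-\varepsilon\lambda_2-\beta>0$ by the gain condition \eqref{eq:saturated-control-constraints}. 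The only genuine subtlety, and the step I would treat most carefully, is the $\beta=0$ LaSalle argument together with verifying the $C^1$ regularity of the saturation at $\varphi=0$ so that the linearization is legitimate; everything else is short, provided the Lyapunov cancellation is carried out cleanly.
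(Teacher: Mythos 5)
Your proposal is correct and takes essentially the same route as the paper's proof: the same exact cancellation yielding $\dot V_1=-\varepsilon(1+\varepsilon)\phi_2^2(\eta_2)-\beta\,\varphi^2(\eta)/\sqrt{\delta^2+\left(\min(0,\varphi(\eta))\right)^2}$, the same split between $\beta>0$ (negative definiteness) and $\beta=0$ (Barbashin--Krasovskii/LaSalle on $\{\eta_2=0\}$), the same linearization with saturation slope $1/\delta$ giving the Jacobian \eqref{Jacobian-Iasson}, and the same positivity bound from $\phi_2\geq-\lambda_2$, the saturation's infimum $-1$, and \eqref{eq:saturated-control-constraints}. Your explicit trace/determinant check, with $\det=\frac{\lambda_2}{\lambda_1}\left(1+\varepsilon+\varepsilon\beta/\delta\right)>0$, and your verification of $C^1$ regularity of the saturation at $\varphi=0$ simply spell out details the paper asserts when stating that \eqref{Jacobian-Iasson} is Hurwitz.
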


\begin{proof}
First, we observe from the fact that the minimum of $\frac{\varphi(\eta)}{\sqrt{\delta^2+ \left(\min(0, \varphi(\eta)\right)^2}}$ is $-1$, the minimum of $\phi_2$ in \eqref{phi-def} is $-\lambda_2$, the gain condition \eqref{eq:saturated-control-constraints}, and the definition of the feedback $u(\eta)$ in \eqref{eq:global-stabilization-controller} that $u$ remains positive. To prove global  asymptotic stabilization, we take the Lyapunov function $V_1$ defined in \eqref{eq:lyapunov-ode} with the help of \eqref{Phi-defs}, and obtain
\begin{align}
\label{eq-V1dot-beta}
    &\dot{V}_1(\eta) \nonumber \\
    =& \phi_1(\eta_1) \Bigg(-\left(1+\varepsilon\right) \phi_2(\eta_2)
    - \frac{\beta \varphi(\eta)}{\sqrt{\delta^2+ \left(\varphi^-\right)^2}}\Bigg) \nonumber \\
    &+ \left(1+\varepsilon\right) \phi_2(\eta_2) \Bigg(\phi_1(\eta_1) - \varepsilon \phi_2(\eta_2) 
    - \frac{\beta \varphi(\eta)}{\sqrt{\delta^2+ \left(\varphi^-\right)^2}}\Bigg)\nonumber\\
    =& - \varepsilon(1+\varepsilon)\phi_2^2(\eta_2)\nonumber\\
    &- \left(\phi_1(\eta_1)+(1+\varepsilon)\phi_2(\eta_2)\right)  \frac{\beta \varphi(\eta)}{\sqrt{\delta^2+ \left(\varphi^-\right)^2}} \nonumber\\
    =& - \varepsilon(1+\varepsilon)\phi_2^2(\eta_2) -  \frac{\beta \varphi^2(\eta)}{\sqrt{\delta^2+ \left(\varphi^-\right)^2}}\,,
\end{align}
\mycomment{
\carina{}
\begin{align}
    \dot{V}_1(\eta) 
    =& \phi_1(\eta_1) \left(-\left(1+\varepsilon\right) \phi_2(\eta_2)
    - \Xi(\varphi) \right) \nonumber\\
    &+ \left(1+\varepsilon\right) \phi_2(\eta_2) \left(\phi_1(\eta_1) - \varepsilon \phi_2(\eta_2) 
    - \Xi(\varphi) \right)\nonumber\\
    =& - \varepsilon(1+\varepsilon)\phi_2^2(\eta_2)
    - \left(\phi_1(\eta_1)+(1+\varepsilon)\phi_2(\eta_2)\right) \Xi(\varphi) \nonumber\\
    =& - \varepsilon(1+\varepsilon)\phi_2^2(\eta_2) -  \Xi(\varphi)\,,
\end{align}}
%Therefore, $\dot{V}_1(\eta)$ 
where we have denoted $\varphi^{-} = \min(0, \varphi)$. When $\beta>0$, \eqref{eq-V1dot-beta} is negative definite and, consequently, $0\in\mathbb{R}^2$ is globally asymptotically stable for the closed-loop system (\ref{eq:ode-system}) with (\ref{eq:global-stabilization-controller}). When $\beta=0$, the set $\dot V_1=0$ is the set $\eta_2$ and the only solution that remains in this set, for the closed-loop system (\ref{eq:ode-system}), (\ref{eq:global-stabilization-controller}), is the solution with $\eta_1(t)\equiv 0$. By the Barbashin-Krasovskii theorem, $0\in\mathbb{R}^2$ is globally asymptotically stable. The Jacobian matrix of the closed-loop system (\ref{eq:ode-system}) with (\ref{eq:global-stabilization-controller}) at $0\in\mathbb{R}^2$,
\begin{align}
\label{Jacobian-Iasson}
%    J_{V_1} = 
\begin{bmatrix}
         -\frac{k}{\lambda_1} & -(1+\varepsilon)\lambda_2(1+k) \\
        \frac{1}{\lambda_1}\left(1-k\right) & -\varepsilon\lambda_2 - k (1+\varepsilon)\lambda_2
    \end{bmatrix},
\end{align}
where $k=\frac{\beta}{\delta}$, is a Hurwitz matrix for both $\beta>0$ and $\beta= 0$. Therefore, $0 \in \mathbb{R}^2$ is also 
%LES (\carina{meaning?}) 
locally exponentially stable for the closed-loop system (\ref{eq:ode-system}) with (\ref{eq:global-stabilization-controller}). 
\end{proof}

We briefly examine the local exponential performance of the closed-loop system. The characteristic polynomial of the Jacobian \eqref{Jacobian-Iasson} is given by
\begin{eqnarray}
    p(s) &=& s^2 + \left( k \left(\frac{1}{\lambda_1} + \lambda_2\right) + \varepsilon \lambda_2 \left(1+k\right) \right) s \nonumber \\
    &&+ \left(1+\varepsilon\left(1+k\right)\right) \frac{\lambda_2}{\lambda_1}.
\end{eqnarray}
By examining the discriminant condition of this polynomial, 
\begin{equation}
\label{discriminant}
    \left( k \left(\frac{1}{\lambda_1} + \lambda_2\right)  + \varepsilon \lambda_2 \left(1+k\right) \right)^2 \geq 4 \left(1+ \varepsilon \left(1+k\right)\right) \frac{\lambda_2}{\lambda_1}
\end{equation}
it is evident that the roots of the polynomial are not only negative but also real 
%for sufficiently large $k>0$ (i.~e., for sufficiently small $\delta>0$), the eigenvalues of the Jacobian matrix of the closed-loop system (\ref{eq:ode-system}) with (\ref{eq:global-stabilization-controller}) at $0\in\mathbb{R}^2$ are real since the following inequality holds 
for sufficiently large $k>0$ (i.~e., for $\beta>0$ and sufficiently small $\delta>0$) due to the fact that the left side of \eqref{discriminant} is quadratic in $k$ and the right side linear in $k$. In conclusion, the linearization of the closed-loop system (\ref{eq:ode-system}) with (\ref{eq:global-stabilization-controller}) has a damped response for a small enough $\delta$. That means, in turn, that the oscillations of the predator-prey open-loop motion are completely eliminated, at least locally, for a small enough $\delta$, for any $\varepsilon$ chosen to satisfy the dilution positivity condition \eqref{eq:saturated-control-constraints}.

\section{Regional Stabilization with Positive Dilution} \label{sec:regional-stabilization-positive-dilution}

Now we turn our attention to the study of the stabilizing properties of the feedback law (\ref{eq:global-stabilization-controller}) in the presence of the IDE $\psi$-perturbations \eqref{eq:psi_i}. From \eqref{eq-eta1-perturbed}, \eqref{eq-eta2-perturbed}, \eqref{psihat-defs}, one obtains the perturbed model 
\begin{subequations}
\label{eq-etaeta-perturbed+}
\begin{align} 
    %\dot{\eta}_i = u^* - u + \phi_j(\eta_j+v_j(\psitminusa{j})).
\label{eq-eta1-perturbed+}
   \dot{\eta}_1 &= u^* - u - \phi_2+\tilde\phi_2,\\
\label{eq-eta2-perturbed+}
   \dot{\eta}_2 &= u^* - u + \phi_1 -\tilde\phi_1\,,
\end{align}
\end{subequations}
where we have denoted $\tilde\phi_i = \phi_i - \hat\phi_i$ and have suppressed the arguments $\eta_1,\eta_2 $ for notational brevity. 
With \eqref{eq-eta1-perturbed+}, \eqref{eq-eta2-perturbed+}, and \eqref{eq:global-stabilization-controller}, and \eqref{eq:lyapunov-ode} we get
\begin{eqnarray}
    \dot{V}_1 &=& - \varepsilon(1+\varepsilon)\phi_2^2 -  \frac{\beta \varphi^2}{\sqrt{\delta^2+ \left(\varphi^-\right)^2}} \nonumber \\
    &&+\phi_1\tilde\phi_2 - (1+\varepsilon)\phi_2\tilde\phi_1
    \nonumber\\
    &=&- \varepsilon(1+\varepsilon)\phi_2^2 -  \frac{\beta \varphi^2}{\sqrt{\delta^2+ \left(\varphi^-\right)^2}} \nonumber \\
    &&-(1+\varepsilon)\phi_2 \left(\tilde\phi_1 +\tilde\phi_2\right) +\varphi \tilde\phi_2
    % \nonumber\\
    % &\leq&- \frac{\varepsilon}{2}(1+\varepsilon)\phi_2^2 +\frac{1+\varepsilon}{\varepsilon} \left(\tilde\phi_1^2 +\tilde\phi_2^2\right)
    % -  \frac{\beta \varphi^2}{\sqrt{\delta^2+ \left(\varphi^-\right)^2}}
    %  +\varphi \tilde\phi_2
    \,.
\end{eqnarray} 
From  \eqref{eq:temp-mk} we recall that 
\begin{align}
 \frac{\gamma_1}{\sigma_1}
 D^+ h(G_1)
 +& \frac{\gamma_2}{\sigma_2}
 D^+ h(G_2) \nonumber\\
&\leq - \gamma_1 \left(e^{G_1}-1\right)^2 -  \gamma_2 \left(e^{G_2}-1\right)^2  \,. \label{eq:temp-mk+}
\end{align} 
Now recall the Lyapunov functional \eqref{eq:lyapunov-ode-ide}, for which we obtain
\begin{eqnarray}
\label{Vdot-Iasson}
D^+ V &\leq &    \dot{V}_1 +  \frac{\gamma_1}{\sigma_1}
%e^{\sigma_1 A}}
% \frac{d}{dt}
 D^+ h(G_1)+ \frac{\gamma_2}{\sigma_2}
 %e^{\sigma_2 A}}
% \frac{d}{dt}
 D^+ h(G_2)
\nonumber\\
    &\leq&- \frac{\varepsilon}{2}(1+\varepsilon)\phi_2^2 +\frac{1+\varepsilon}{\varepsilon} \left(\tilde\phi_1^2 +\tilde\phi_2^2\right) \nonumber\\
    &&-  \frac{\beta \varphi^2}{\sqrt{\delta^2+ \left(\varphi^-\right)^2}}
     +\varphi \tilde\phi_2
     \nonumber\\
     && - \gamma_1  \left(e^{G_1}-1\right)^2 -  \gamma_2 \left(e^{G_2}-1\right)^2 
    \,. 
\end{eqnarray} 
Additionally, from \eqref{eq:v_leq_G}, \eqref{phitildes-fulldef}, 
\begin{eqnarray}
\label{phitilde-bound1}
\left| \tilde\phi_1\right| &\leq & \left| \frac{1}{\lambda_1}-\phi_1\right| \ \left| e^{G_1}-1\right|
\\
\label{phitilde-bound2}
\left| \tilde\phi_2\right| &\leq & \left| {\lambda_2}+\phi_2\right| \ \left| e^{G_2}-1\right|\,.
\end{eqnarray}
Let us note next that 
\begin{align}
\label{varphi-inequality}
- \frac{\beta \varphi^2}{\sqrt{\delta^2+ \left(\varphi^-\right)^2}}
     +\varphi \tilde\phi_2 
     \leq& - \frac{\beta}{2} \frac{ \varphi^2}{\sqrt{\delta^2+ \left(\varphi^-\right)^2}} + \frac{\varpi}{2}\tilde\phi_2^2 \nonumber \\
     &+\frac{\varphi^2}{2}\left(-\frac{\beta }{\sqrt{\delta^2+ \left(\varphi^-\right)^2}}  + \varpi\right)
     ,
\end{align}
where we take $\beta >0$ (unlike in Theorem \ref{thm:global-stabilization-controller} where $\beta=0$ is also allowed) and choose $\varpi$ such that
\begin{equation}
0 < \varpi < \beta/\delta\,
\end{equation}
in order to have
\begin{align}
-  \frac{\beta \varphi^2}{\sqrt{\delta^2+ \left(\varphi^-\right)^2}}
     +\varphi \tilde\phi_2 
     \leq& - \frac{\beta}{2} \frac{ \varphi^2}{\sqrt{\delta^2+ \left(\varphi^-\right)^2}} + \frac{\varpi}{2}\tilde\phi_2^2
%     +\frac{\varphi^2}{2}\left(-\frac{\beta }{\sqrt{\delta^2+ \left(\varphi^-\right)^2}}  + \varpi\right) 
\nonumber \\
\qquad \mbox{whenever}  \quad
\varphi(\eta) \geq& - \sqrt{\frac{\beta^2}{\varpi^2}-\delta^2}\,.
\end{align}
From \eqref{Vdot-Iasson}, \eqref{phitilde-bound1}, \eqref{phitilde-bound2}, \eqref{varphi-inequality}, we get
\begin{align}
D^+ V&  \nonumber\\ 
\leq &- \frac{\varepsilon}{2(1+\varepsilon)}\phi_2^2 
+\frac{1+\varepsilon}{\varepsilon}  \left(\frac{1}{\lambda_1} - \phi_1\right)^2
\left(e^{G_1}-1\right)^2\nonumber\\
&+\frac{1+\varepsilon}{\varepsilon} 
        \left(\lambda_2 + \phi_2\right)^2
\left(e^{G_2}-1\right)^2 
\nonumber\\
&
- \frac{\beta}{2} \frac{ \varphi^2}{\sqrt{\delta^2+ \left(\varphi^-\right)^2}} 
     +\frac{\varphi^2}{2}\left(-\frac{\beta }{\sqrt{\delta^2+ \left(\varphi^-\right)^2}}  + \varpi\right)  \nonumber\\  
&+\frac{1}{2\varpi} 
        \left(\lambda_2 + \phi_2\right)^2
\left(e^{G_2}-1\right)^2 
\nonumber\\
& - \gamma_1  \left(e^{G_1}-1\right)^2 -  \gamma_2 \left(e^{G_2}-1\right)^2 
\nonumber\\
=&
- \frac{\varepsilon}{2(1+\varepsilon)}\phi_2^2 
- \frac{\beta}{2} \frac{ \varphi^2}{\sqrt{\delta^2+ \left(\varphi^-\right)^2}} 
\nonumber\\
&
+\left[\frac{1+\varepsilon}{\varepsilon}  \left(\frac{1}{\lambda_1} - \phi_1\right)^2-\gamma_1\right]
\left(e^{G_1}-1\right)^2
\nonumber\\
&
+\left[\frac{1+\varepsilon}{\varepsilon} 
        \left(\lambda_2 + \phi_2\right)^2
        +\frac{1}{2\varpi} - \gamma_2\right]
\left(e^{G_2}-1\right)^2 
\nonumber\\
& 
     +\frac{\varphi^2}{2}\left(-\frac{\beta }{\sqrt{\delta^2+ \left(\varphi^-\right)^2}}  + \varpi\right)    
    \,. 
\end{align}
In the set
\begin{subequations}
\label{eta-sets-allowed}
\begin{align}
    \eta_1 &\geq -\ln\left(\lambda_1\sqrt{ \frac{  \varepsilon}{1+\varepsilon}\frac{\gamma_1}{2}}\right)=
    :- H_1,\\
    \eta_2 &\leq \ln\left(\frac{1}{\lambda_2}\sqrt{\frac{  \varepsilon}{1+\varepsilon}\frac{\gamma_2 - 1/\varpi }{2}}\right)=:H_2.
\end{align}
\end{subequations}
\begin{equation}
\label{varphi-condition}
    \phi_1(\eta_1) + \left(1+\varepsilon\right) \phi_2(\eta_2) \geq - \sqrt{\frac{\beta^2}{\varpi^2}-\delta^2}\,,
\end{equation}
where we note that $\eta_1 = \eta_2 = G_1 = G_2 =0$ is strictly inside, 
the following holds,
\begin{eqnarray}
\label{final-Vdot}
D^+ V &\leq & 
- \frac{\varepsilon}{2(1+\varepsilon)}\phi_2^2 
- \frac{\beta}{2} \frac{ \varphi^2}{\sqrt{\delta^2+ \left(\varphi^-\right)^2}} 
\nonumber\\
&&
-\frac{\gamma_1}{2}
\left(e^{G_1}-1\right)^2
-\frac{\gamma_2}{2}
\left(e^{G_2}-1\right)^2 \,, 
\end{eqnarray}
provided the analysis constants $\varpi,\gamma_1,\gamma_2$ are chosen such that
\begin{eqnarray}
\label{ROA1}
&0 < \varpi < \displaystyle \frac{\beta}{\delta} & \\
\label{ROA2}
&\gamma_1 > \displaystyle\frac{2}{\lambda_1^2}\frac{1+\varepsilon}{\varepsilon} > 0 &\\
\label{ROA3}
&\gamma_2 >  2\lambda_2^2 \displaystyle\frac{1+\varepsilon}{\varepsilon}+\frac{1}{\varpi}
> 2\lambda_2^2 \displaystyle\frac{1+\varepsilon}{\varepsilon}+\frac{\delta}{\beta}
>0 \,.&
\end{eqnarray}
Since \eqref{final-Vdot} is negative definite within the set \eqref{ROA1}, \eqref{ROA2}, \eqref{ROA3}, then the largest level set of $V$ is an estimate of the region of attraction of $\eta=0, \psi\equiv 0$.

The set \eqref{varphi-condition} is difficult to imagine, even with the definitions \eqref{phi-def}. However, in the ``exponentiated'' variables, $q_1 = {\rm e}^{\eta_1}-1, q_2 = {\rm e}^{\eta_2}-1$, this set is written as 
\begin{align}
\label{hyperbola-H}
q_2 &\geq \mathcal{H}(q_1) \ \text{with} \nonumber\\
\mathcal{H}(q_1) &:= \frac{1}{(1+\varepsilon)\lambda_1\lambda_2} 
\left[ \frac{1}{1+q_1}- 
\left(1+ \lambda_1\sqrt{\frac{\beta^2}{\varpi^2}-\delta^2}\right)\right]\,,
\end{align}
where $q_2 = \mathcal{H}(q_1)$ is a hyperbola, and the origin $q_1=q_2 = 0$ (i.e., $\eta_1=\eta_2 = 0$) is within the set \eqref{hyperbola-H}.

\begin{theorem} \label{thm:Iasson}
Let Assumption~\ref{assump:technical_assumption} hold. Consider the closed-loop system \eqref{eq-etaeta-perturbed+}, i.e., system (\ref{eq:transformed-system}) with the control law \eqref{eq:global-stabilization-controller},  
on the state space $\mathbb{R}^2 \times \mathcal{S}$. Let  $\delta>0$ and the parameters $\varepsilon, \beta >0 $ satisfy the conditions \eqref{eq:saturated-control-constraints}. Define $V(\eta,\psi)$ by means of \eqref{eq:lyapunov-ode-ide} for $\gamma_1, \gamma_2$ that satisfy \eqref{ROA2}, \eqref{ROA3} for some $\varpi\in \left(0, \frac{\beta}{\delta}\right)$. Then all the conclusions of Theorem \ref{thm:local-stability}
hold with $\mathcal{D}$ replaced by 
\begin{align}
\label{eq:constraints+}
    \overline{\mathcal{D}} := \Biggl\{ &(\eta,\psi) \in \mathbb{R}^2\times \mathcal{S} \ 
       \Bigg| \nonumber\\
       & \ \eta_1\geq-H_1\,, 
  \ \ 
    \eta_2\leq H_2\,, 
    \nonumber\\
    &  \ \ %u= 
    \phi_1(\eta_1) + (1+\varepsilon)\phi_2(\eta_2) >
 - \sqrt{\frac{\beta^2}{\varpi^2}-\delta^2} \Biggr\}\,,
\end{align}
where $H_1,H_2$ are defined by  \eqref{eta-sets-allowed}.
\end{theorem}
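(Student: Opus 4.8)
The plan is to follow the architecture of the proof of Theorem~\ref{thm:local-stability} almost verbatim, exploiting the fact that the heavy analytical work---the Dini-derivative estimate \eqref{final-Vdot}---has already been assembled in the run-up to the statement. First I would record existence and uniqueness: by Proposition~\ref{prop:trafo} the solutions of \eqref{eq:transformed-system} are in bijection with those of \eqref{eq:karafyllis_system}, the $\psi$-components exist uniquely in $\mathcal{S}$ with the lower bound \eqref{eq:psi_lower_bound} by the results of \cite{karafyllis2017stability}, and the $\eta$-subsystem is then a standard time-varying ODE once $\psi$ is substituted, so local existence/uniqueness is immediate and the map $v_i$ in \eqref{eq:definition-v-map} is well defined and continuous.

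Next I would dispatch positivity of the input, which here is the easiest conclusion rather than the hardest. The feedback \eqref{eq:global-stabilization-controller} depends on $\eta$ alone and not on $\psi$, so the argument already given in the proof of Theorem~\ref{thm:global-stabilization-controller} applies unchanged: the saturation term is bounded below by $-1$ and $\phi_2\ge -\lambda_2$, whence $u\ge u^*-\varepsilon\lambda_2-\beta>0$ by the gain condition \eqref{eq:saturated-control-constraints}. This bound is \emph{global} in $\eta$ and trivially independent of $\psi$, so in particular $u(t)>0$ holds on all of $\overline{\mathcal{D}}$ and for all $t\ge 0$.

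The core of the proof is then purely a packaging of \eqref{final-Vdot}. The three constraints defining $\overline{\mathcal{D}}$ in \eqref{eq:constraints+} are precisely the conditions under which the three indefinite terms in the pre-derived expansion of $D^+V$ are rendered nonpositive: the half-plane bounds $\eta_1\ge -H_1$ and $\eta_2\le H_2$ from \eqref{eta-sets-allowed}, together with \eqref{ROA2}, \eqref{ROA3}, dominate the coefficients multiplying $(e^{G_1}-1)^2$ and $(e^{G_2}-1)^2$, while the constraint \eqref{varphi-condition} guarantees $\beta/\sqrt{\delta^2+(\varphi^-)^2}\ge\varpi$, killing the residual saturation term. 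Hence on $\overline{\mathcal{D}}$ one has $D^+V\le -W<0$ with $W$ positive definite in $(\eta,G)$, exactly as in \eqref{final-Vdot}. Since $V$ from \eqref{eq:lyapunov-ode-ide} is radially unbounded, its level sets $\Omega_c$ are compact; choosing the largest $\Omega_c\subset\overline{\mathcal{D}}$ yields positive invariance, and the positive definiteness of $V$ and $W$ in $(\eta,G)$ supplies the class-$\mathcal{KL}$ estimate \eqref{eq:KL-estimate}. Local exponential stability in the stated norm follows, as in Theorem~\ref{thm:local-stability}, from the local quadratic behavior of $V_1$ and the sandwich $\|\psi_i\|_\infty\le G_i\le\frac{e^{\sigma_i A}}{1-r}\|\psi_i\|_\infty$.

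I expect the only genuine subtlety---and the point I would verify most carefully---to be that $\overline{\mathcal{D}}$ is a bona fide neighborhood of the origin with nonempty interior, so that the region-of-attraction estimate is nonvacuous. This requires checking that $\eta_1=\eta_2=G_1=G_2=0$ lies \emph{strictly} inside all three constraints: the half-plane bounds are clear once $H_1,H_2>0$, which is guaranteed by \eqref{ROA2}, \eqref{ROA3}, and the awkward constraint \eqref{varphi-condition} reduces at the origin to $0>-\sqrt{\beta^2/\varpi^2-\delta^2}$, which holds strictly precisely because $0<\varpi<\beta/\delta$ in \eqref{ROA1}; for a geometric picture I would pass to the exponentiated coordinates $q_i=e^{\eta_i}-1$ and read off from the hyperbola representation \eqref{hyperbola-H} that the origin sits in the admissible region $q_2\ge\mathcal{H}(q_1)$. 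I would also flag that $\beta>0$ is now essential, in contrast to Theorem~\ref{thm:global-stabilization-controller}, since both the Young-type split \eqref{varphi-inequality} and the admissible range $0<\varpi<\beta/\delta$ degenerate when $\beta=0$.
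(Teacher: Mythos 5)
Your proposal is correct and follows essentially the same route as the paper, whose proof of Theorem~\ref{thm:Iasson} is precisely the derivation preceding its statement: the Young-type split \eqref{varphi-inequality} with $0<\varpi<\beta/\delta$, the observation that \eqref{varphi-condition} forces $\beta/\sqrt{\delta^2+(\varphi^-)^2}\geq\varpi$, the domination of the $(e^{G_i}-1)^2$ coefficients via \eqref{eta-sets-allowed}, \eqref{ROA2}, \eqref{ROA3} leading to \eqref{final-Vdot}, positivity of $u$ imported unchanged from Theorem~\ref{thm:global-stabilization-controller}, and the level-set/invariance/$\mathcal{KL}$/local-exponential conclusions imported from Theorem~\ref{thm:local-stability}. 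Your two flagged subtleties---that the origin lies strictly inside $\overline{\mathcal{D}}$ (guaranteed exactly by $H_1,H_2>0$ under \eqref{ROA2}, \eqref{ROA3} and by $\varpi<\beta/\delta$) and that $\beta>0$ is now essential---are both explicitly noted in the paper as well.
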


By noting that the constants $H_1, H_2$ in \eqref{eta-sets-allowed} increase with $\gamma_1, \gamma_2$, one is tempted to hope that a compact estimate of the region of attraction of $(\eta_1,\eta_2,G_1,G_2)=0$ may be arbitrarily large. However, by examining the dependence of the Lypunov function $V$ in \eqref{eq:lyapunov-ode-ide} on $\gamma_1, \gamma_2$, as well as the dependence on $(\eta_1,\eta_2) $ of the Lyapunov function $V_1$ defined by \eqref{eq:lyapunov-ode} and \eqref{Phi-defs}, one realizes that it is not only impossible to make an estimate of the region of attraction arbitrarily large by increasing $\gamma_1, \gamma_2$, but that such an increase, while expanding the estimate in $(\eta_1,\eta_2)$, shrinks the estimate in $(G_1,G_2)$. In other words, there is a tradeoff between the allowed size of the initial state $\eta$ and the allowed size of the initial profile of the ``internal age-structured perturbation'' $\psi$. All this is not a consequence of a conservative analysis that we conduct. It is a consequence of the engineering and physical requirement that dilution be positive, which dictates the use of saturated feedback and, ultimately, results in the lack of global robustness to the $\psi$-perturbation.

\section{Simulations}\label{sec:simulations}
In this section, we present a numerical example of the interacting population system and show simulations of both control laws for different initial conditions.

\subsection{Model Parameters and Equilibria}

The age-dependent kernels used for simulation are 
\begin{subequations}\label{eq:parameters}
    \begin{align}
	\mu_i(a)&=\bar{\mu}_i\mathrm{e}^a \\
    k_i(a)&=\bar{k}_i\mathrm{e}^{-a} \\
	g_i(a)&=\bar{g}_i\left(a-a^2\right), 
\end{align}
\end{subequations}
which are biologically inspired and have led to realistic simulation results when studying the behavior of one species in a bioreactor \cite{kurth2023control}. Mortality rates $\mu_i(a)$ increase exponentially with age whereas fertility rates $k_i(a)$ decrease exponentially with age. The maximum interaction is achieved in the middle of the age interval with maximum age $A=1$.
More precisely, we choose both species to exhibit the same behavior, namely $\bar{\mu}_i = 0.5$, $\bar{k}_i = 3$, $\bar{g}_i =0.4$. With these species characteristics, we obtain the parameters $\zeta_1=\zeta_2=1.17$ from Lemma~\ref{lem:ls}.
%, representing the balance between birth, mortality, and interaction. 
A species without external inputs adding population can only survive if $\zeta_i > 0 $.
We choose $u^*=0.15$, which corresponds approximately to $\zeta_1-1$, and obtain $\lambda_1=0.98$, $\lambda_2=1.02$ according to constraint (\ref{eq:u_star_constraint}). From that, the initial values of the steady states follow to $x_1^*(0)=33.81$, $x_2^*(0)=35.19$ by (\ref{eq:ic_constraint}).

We have the freedom to choose $x_{i,0}(a)$, and from that $\eta_i(0)$, $\psi_{i,0}(a)$ follow by transformation (\ref{eq:system-trafo}). 
Two interesting scenarios are 1.) an initially overpopulated prey/underpopulated predator case ($\eta_0$ in fourth quadrant) and 2.) an initially underpopulated prey/overpopulated predator case ($\eta_0$ in second quadrant). Recall that $\eta_{i}>0$ means $x_{i} > x_i^*$, and $\eta_{i}<0$ means $x_{i} < x_i^*$. 
The initial conditions chosen in the following simulations are
    \begin{align} 
    x_{0,\mathrm{FQ}} &= 
    \begin{bmatrix}
        x_1^*(a) e^{1 + 2a} \\
        x_2^*(a) e^{-1 - 2a}
    \end{bmatrix} \label{eq:ic-1} \\
    x_{0,\mathrm{SQ}} &=
\begin{bmatrix}
    x_1^*(a) e^{-1 - 2a} \\
    x_2^*(a) e^{1 + 2a}
\end{bmatrix} \label{eq:ic-2}
    \end{align}
resulting in $\eta_{0,\mathrm{FQ}}=[1.57,-1.41]$ and $\eta_{0,\mathrm{SQ}}=[-1.41, 1.57]$.

\subsection{Results}
We recall from Figure~\ref{fig:simulation-uncontrolled-ic1} that the open-loop system is marginally stable. Simulations of the autonomous $\psi$-dynamics are omitted for the controlled system as they do not change with respect to the uncontrolled system simulations.

We refer to the initial control law (\ref{eq:control_law}) from Sections~\ref{sec:control-design} and \ref{sec:stability-nonzero-psi} which does not ensure positive dilution $u(t)$ as \emph{control A}. The parameters of said control (\ref{eq:control_law}) are chosen to be $\varepsilon=0.2$, $\beta=0.6$. Figures~\ref{fig:control-1-ic1} and Figure~\ref{fig:control-1-ic2} show simulations of system~(\ref{eq:karafyllis_system}) with \emph{control A} (\ref{eq:control_law}) for the ICs (\ref{eq:ic-1}) and (\ref{eq:ic-2}), respectively.

We refer to the enhanced, restrained control law (\ref{eq:global-stabilization-controller}) from Sections~\ref{sec:stability-positive-dilution} and \ref{sec:regional-stabilization-positive-dilution} which ensures positive dilution $u(t)$ as \emph{control B}. The parameters of said control (\ref{eq:global-stabilization-controller}) are chosen to be $\delta=0.2$, $\beta=0.13$, $\varepsilon=0.01$ such that constraint \eqref{eq:saturated-control-constraints} is met. Figures~\ref{fig:control-2-ic1} and \ref{fig:control-2-ic2} show simulations of system (\ref{eq:karafyllis_system}) with \emph{control B} (\ref{eq:global-stabilization-controller}), for the ICs (\ref{eq:ic-1}) and (\ref{eq:ic-2}), respectively.

% CONTROL 1 - IC 1
\begin{figure*}
    \centering
    \begin{minipage}{0.49\textwidth}
    \includegraphics[width=\columnwidth]{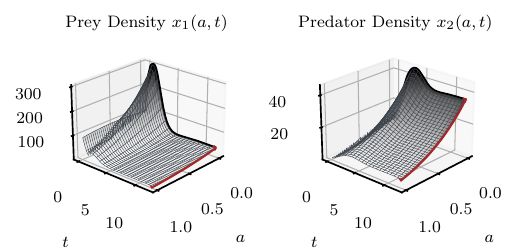}
    \end{minipage}
    \begin{minipage}{0.49\textwidth}
    \includegraphics[width=\columnwidth]{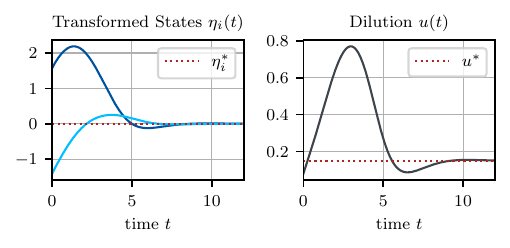}
    \end{minipage}
    \caption{\emph{Control A with initially underpopulated predator and overpopulated prey}: Population densities $x_i$ from system~(\ref{eq:karafyllis_system}) with steady-state $x_i^*(a)$ (\mlLineLegend{rot}), and transformed state variables $\eta_1$ (\mlLineLegend{dunkelblau}) and $\eta_2$ (\mlLineLegend{hellblau}) from representation (\ref{eq:transformed-system}) under control law (\ref{eq:control_law}), and with parameter set (\ref{eq:parameters}), ICs (\ref{eq:ic-1}), $\varepsilon=0.2$, $\beta=0.6$.}
    \label{fig:control-1-ic1}
\end{figure*}

% CONTROL 2 - IC 1
\begin{figure*}
    \centering
    \begin{minipage}{0.49\textwidth}
    \includegraphics[width=\columnwidth]{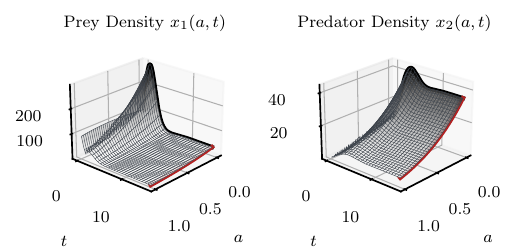}
    \end{minipage}
    \begin{minipage}{0.49\textwidth}
    \includegraphics[width=\columnwidth]{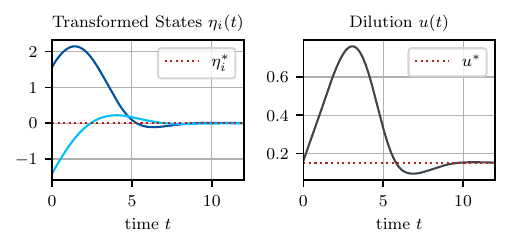}
    \end{minipage}
    \caption{\emph{Control B with initially underpopulated predator and overpopulated prey}: Population densities $x_i$ from system~(\ref{eq:karafyllis_system}) with steady-state $x_i^*(a)$ (\mlLineLegend{rot}), and transformed state variables $\eta_1$ (\mlLineLegend{dunkelblau}) and $\eta_2$ (\mlLineLegend{hellblau}) from representation (\ref{eq:transformed-system})  under control law (\ref{eq:global-stabilization-controller}), and with parameter set (\ref{eq:parameters}), ICs (\ref{eq:ic-1}), $\delta=0.2$, $\beta=0.13$, $\varepsilon=0.01$.}
    \label{fig:control-2-ic1}
\end{figure*}

% CONTROL 1 - IC 2
\begin{figure*}
    \centering
    \begin{minipage}{0.49\textwidth}
    \includegraphics[width=\columnwidth]{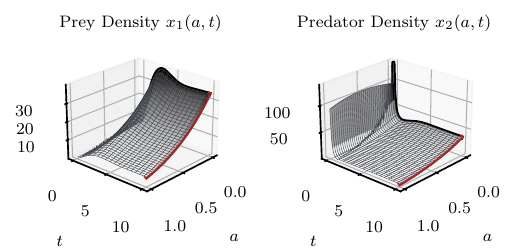}
    \end{minipage}
    \begin{minipage}{0.49\textwidth}
    \includegraphics[width=\columnwidth]{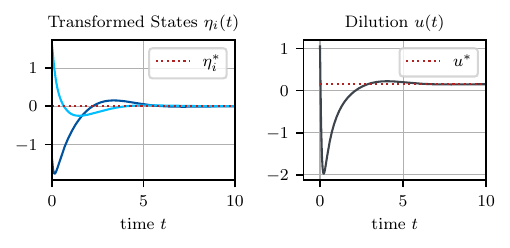}
    \end{minipage}
    \caption{\emph{Control A with initially underpopulated prey and overpopulated predator}: Population densities $x_i$ from system (\ref{eq:karafyllis_system}) with steady-state $x_i^*(a)$ (\mlLineLegend{rot}), and transformed state variables $\eta_1$ (\mlLineLegend{dunkelblau}) and $\eta_2$ (\mlLineLegend{hellblau}) from representation (\ref{eq:transformed-system}) under control law (\ref{eq:control_law}), and with parameter set (\ref{eq:parameters}), ICs (\ref{eq:ic-2}), $\varepsilon=0.2$, $\beta=0.6$.}
    \label{fig:control-1-ic2}
\end{figure*}

% CONTROL 2 - IC 2
\begin{figure*}
    \centering
    \begin{minipage}{0.49\textwidth}
    \includegraphics[width=\columnwidth]{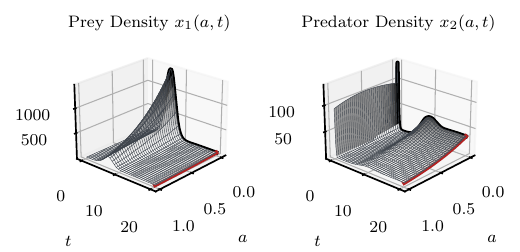}
    \end{minipage}
    \begin{minipage}{0.49\textwidth}
    \includegraphics[width=\columnwidth]{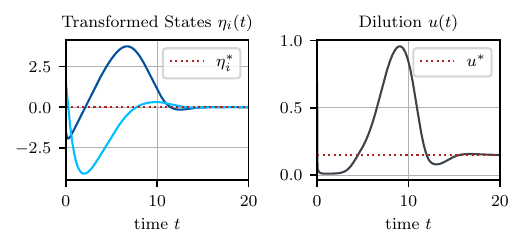}
    \end{minipage}
    \caption{\emph{Control B with initially underpopulated prey and overpopulated predator}: Population densities $x_i$ from system (\ref{eq:karafyllis_system}) with steady-state $x_i^*(a)$ (\mlLineLegend{rot}), and transformed state variables $\eta_1$ (\mlLineLegend{dunkelblau}) and $\eta_2$ (\mlLineLegend{hellblau}) from representation (\ref{eq:transformed-system})  under control law (\ref{eq:global-stabilization-controller}), and with parameter set (\ref{eq:parameters}), ICs (\ref{eq:ic-2}), $\delta=0.2$, $\beta=0.13$, $\varepsilon=0.01$.}
    \label{fig:control-2-ic2}
\end{figure*}

\subsection{Discussion}
Both proposed controllers achieve convergence to the desired steady-states $x_i^*$. 
While the dilution input $u$ stays positive at all times with both proposed control laws for IC \eqref{eq:ic-1} where the predator is initially underpopulated (cf. Figures \ref{fig:control-1-ic1} and \ref{fig:control-2-ic1}), differences can be observed when the predator is initially advantaged (cf. Figures \ref{fig:control-1-ic2} and \ref{fig:control-2-ic2}).   

\subsubsection{Initially underpopulated predator}
\emph{Control A} and \emph{control B} take nearly identical values, which are positive at all times -- harvesting both species is favorable to diminish prey density and increase predator density: the dilution and predator work in tandem relative to the prey population. The steady state is reached after around \qty{8}{\hour}.

\subsubsection{Initially overpopulated predator}
In this case, the enhancement of \emph{control B} with respect to \emph{control A} is pointed out: \emph{Control A} takes negative values when the predator is initially overpopulated (Figure~\ref{fig:control-1-ic2}), whereas \emph{control B} remains positive at all times (Figure~\ref{fig:control-2-ic2}).
\emph{Control A} taking negative values corresponds to adding population, to compensate for the fact that, when the prey population is depleted, positive dilution may result in overharvesting the prey and, consequently, in the extinction of both populations. In this case, \emph{control B} waits until the prey density increases by its natural course before taking control action.
The steady state is reached after \qty{15}{\hour} (whereas employing the unrealistic positive dilution would drive the system to its steady state more than twice as fast).

\section{Conclusion}\label{sec:conclusion}
The interest in extending the foundational design and analysis results in \cite{karafyllis2017stability} goes in many directions. In this paper we extended the single population model to two interacting populations in a predator-prey setup. 
We developed two control designs, a modified Volterra-like CLF which employs possibly negative harvesting, and a more sophisticated, restrained controller with positive harvesting. Both controllers stabilize the ODE system globally asymptotically and locally exponentially. For the ODE-IDE system, the globality is lost with both controllers, but regional stability holds.
Such generalizations of the chemostat problem into multi-population systems, of which epidemiology is one possible application, open exciting possibilities for future research directions, such as investigating predator-predator scenarios or more than two interacting populations.

\section*{References}
\bibliography{main.bib}
\bibliographystyle{IEEEtranS}

\begin{IEEEbiography}[{\includegraphics[width=1in,height=1.25in,clip,keepaspectratio]{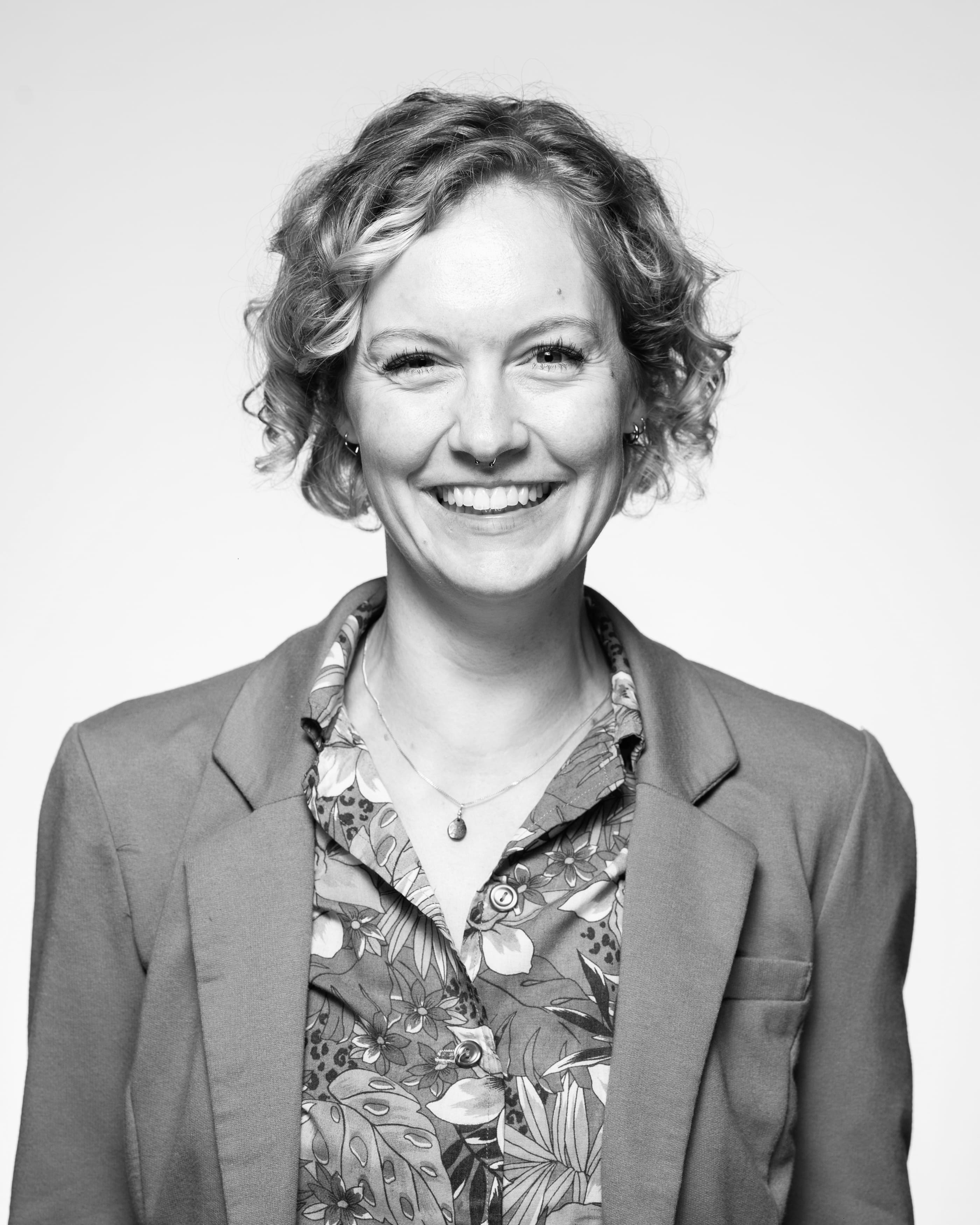}}]{Carina Veil} (Member, IEEE) is a postdoctoral researcher with Stanford University in the field of complex systems and control. She received a B.Sc in biomedical engineering, M.Sc. in engineering cybernetics, and Ph.D in mechanical engineering from the University of Stuttgart, Germany, in 2017, 2020, and 2023, respectively. For her Ph.D., she investigated impedance-based tissue differentiation for tumor detection. From 2023 to 2025, she has been a postdoctoral researcher at the Institute for System Dynamics, University of Stuttgart, Germany working on PDE control in the context of population systems. In 2024, she has been a Visiting Scholar at the University of California San Diego, USA, with Prof. Miroslav Krsti\'c. Her research interests include applications and methods at the intersection of control, nature, and biomedical engineering.
\end{IEEEbiography}

\begin{IEEEbiography}[{\includegraphics[width=1in,height=1.25in,clip,keepaspectratio]{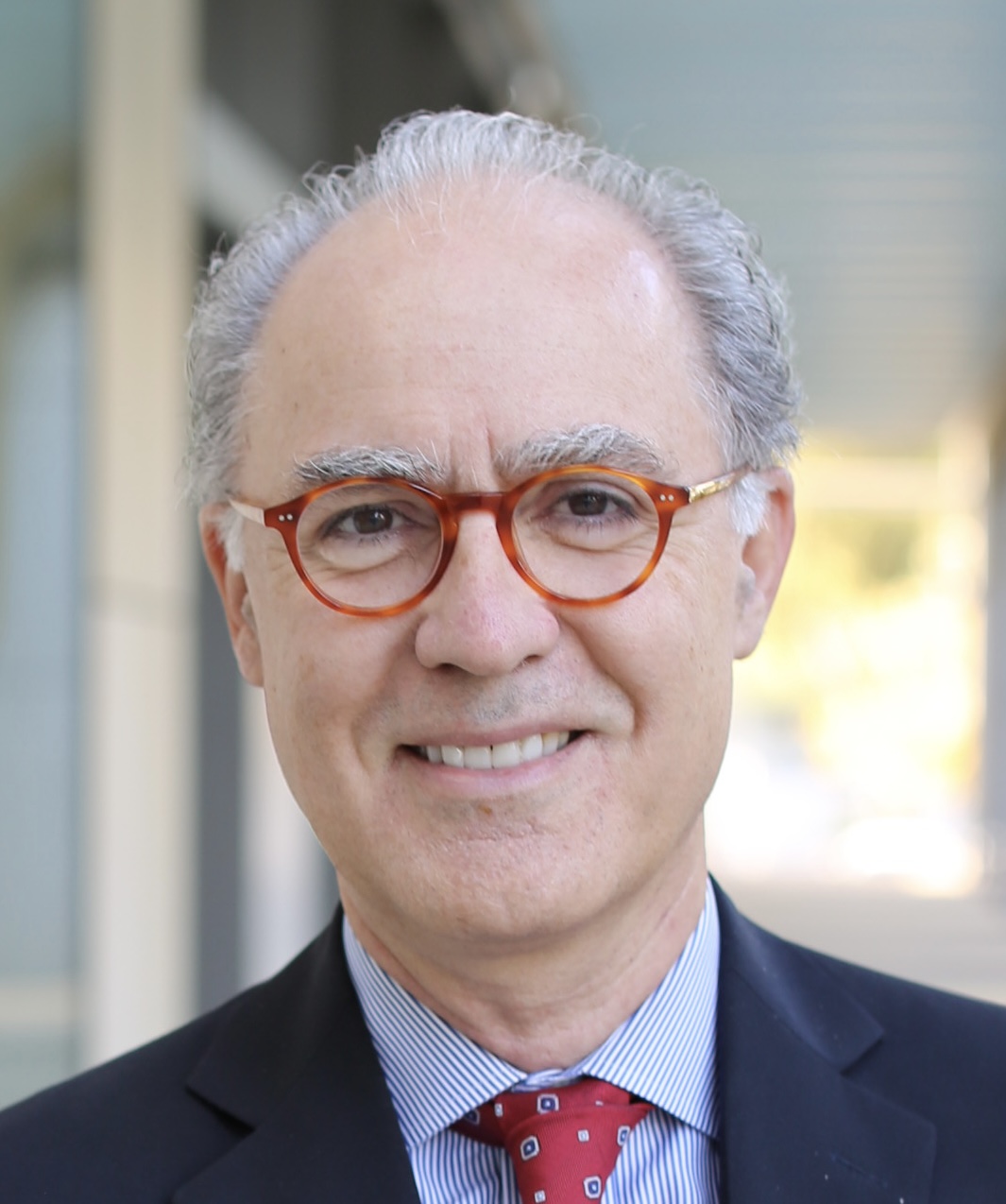}}]{Miroslav Krsti\'c} (Fellow, IEEE) is Distinguished Professor of Mechanical and Aerospace Engineering, holds the Alspach endowed chair, and is the founding director of the Center for Control Systems and Dynamics at UC San Diego. He also serves as Senior Associate Vice Chancellor for Research at UCSD. As a graduate student, Krsti\'c won the UC Santa Barbara best dissertation award and student best paper awards at CDC and ACC. Krsti\'c has been elected Fellow of seven scientific societies - IEEE, IFAC, ASME, SIAM, AAAS, IET (UK), and AIAA (Assoc. Fellow) - and as a foreign member of the Serbian Academy of Sciences and Arts and of the Academy of Engineering of Serbia. He has received the Richard E. Bellman Control Heritage Award, Bode Lecture Prize, SIAM Reid Prize, ASME Oldenburger Medal, Nyquist Lecture Prize, Paynter Outstanding Investigator Award, Ragazzini Education Award, IFAC Nonlinear Control Systems Award, IFAC Ruth Curtain Distributed Parameter Systems Award, IFAC Adaptive and Learning Systems Award, Chestnut textbook prize, AV Balakrishnan Award for the Mathematics of Systems, Control Systems Society Distinguished Member Award, the PECASE, NSF Career, and ONR Young Investigator awards, the Schuck (’96 and ’19) and Axelby paper prizes, and the first UCSD Research Award given to an engineer. Krsti\'c has also been awarded the Springer Visiting Professorship at UC Berkeley, the Distinguished Visiting Fellowship of the Royal Academy of Engineering, the Invitation Fellowship of the Japan Society for the Promotion of Science, and four honorary professorships outside of the United States. He serves as Editor-in-Chief of Systems \& Control Letters and has been serving as Senior Editor in Automatica and IEEE Transactions on Automatic Control, as editor of two Springer book series, and has served as Vice President for Technical Activities of the IEEE Control Systems Society and as chair of the IEEE CSS Fellow Committee. Krsti\'c has coauthored eighteen books on adaptive, nonlinear, and stochastic control, extremum seeking, control of PDE systems including turbulent flows, and control of delay systems.
\end{IEEEbiography}

\begin{IEEEbiography}[{\includegraphics[width=1in,height=1.25in,clip,keepaspectratio]{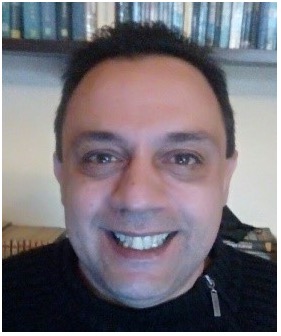}}]{Iasson Karafyllis} is a Professor in the Department of Mathematics, NTUA, Greece. He is a coauthor (with Z.-P. Jiang) of the book Stability and Stabilization of Nonlinear Systems, Springer-Verlag London, 2011 and a coauthor (with M. Krstic) of the books Predictor Feedback for Delay Systems: Implementations and Approximations, Birkhäuser, Boston 2017 and Input-to-State Stability for PDEs, Springer-Verlag London, 2019. Since 2013 he is an Associate Edi-tor for the International Journal of Control and for the IMA Journal of Mathematical Control and Information. Since 2019 he is an Associate Editor for Systems and Control Letters and Mathematics of Control, Signals and Systems. His research interests include mathematical control theory and nonlinear systems theory.
\end{IEEEbiography}

\begin{IEEEbiography}[{\includegraphics[width=.85in,height=1.2in,clip,keepaspectratio]{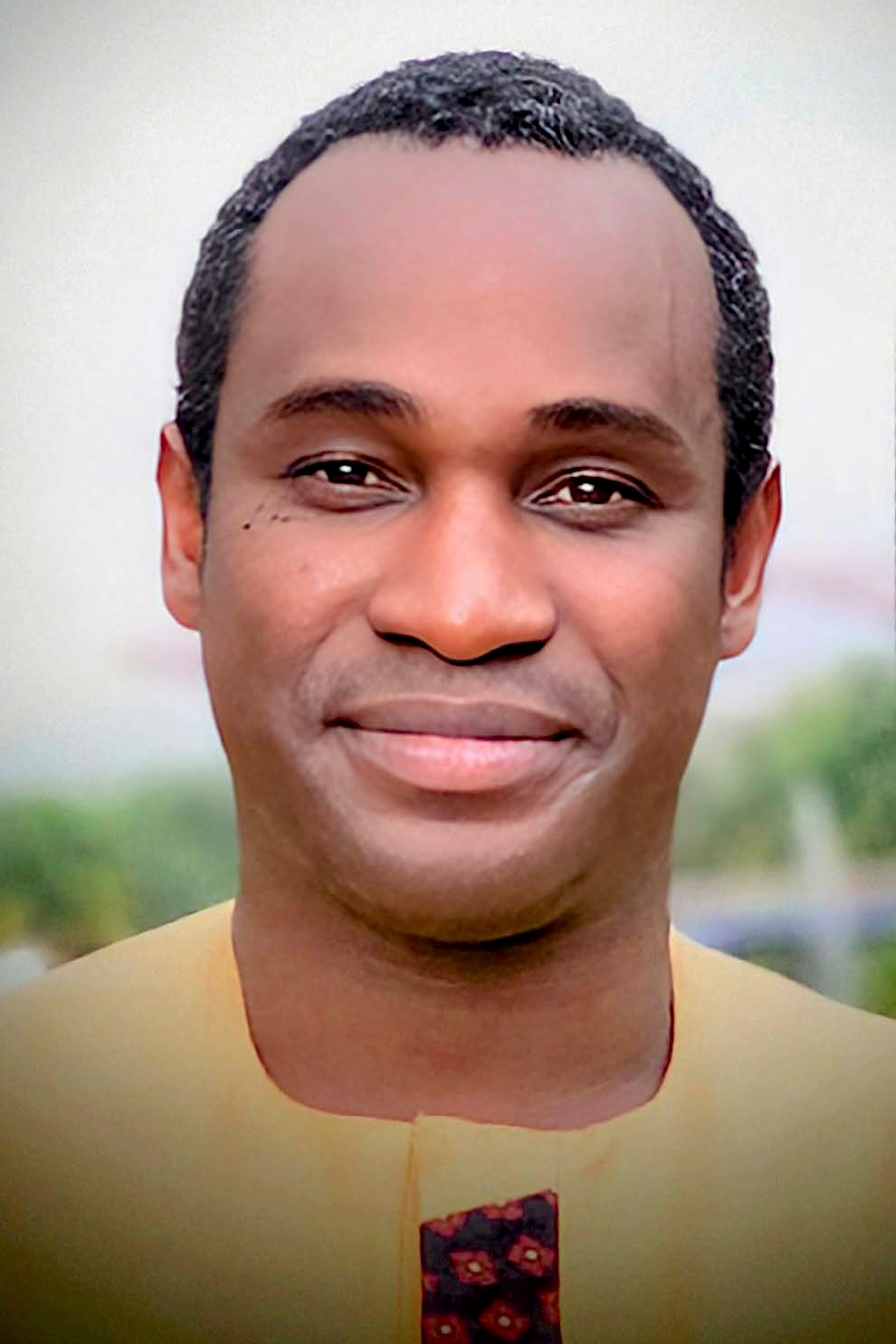}}]{Mamadou Diagne} is an Associate Professor with the Department of Mechanical and Aerospace Engineering at the University of California San Diego, affiliated with the Department of Electrical and Computer Engineering. He completed his Ph.D. in 2013 at the Laboratory of Control and Chemical Engineering of the University Claude Bernard Lyon I in Villeurbanne (France). Between 2017 and 2022, he was an Assistant Professor with the Department of Mechanical Aerospace and Nuclear Engineering at Rensselaer Polytechnic Institute, Troy, New York. From 2013 to 2016, he was a Postdoctoral Scholar, first at the University of California San Diego, and then at the University of Michigan. His research focuses on the control of distributed parameters systems (DPS) and delay systems as well as coupled PDEs and nonlinear ordinary differential equations with a particular emphasis on adaptive control, sampled-data control and event-triggered control. He serves as Associate Editor for Automatica and Systems $\&$  Control Letters. He is Vice-Chair for Industry of the IFAC Technical Committee on Adaptive and Learning Systems. He is a member of the Task Force on Diversity, Outreach $\&$ Development Activities (DODA) of the IEEE Control System Society. He received the NSF Career Award in 2020.
\end{IEEEbiography}

\begin{IEEEbiography}[{\includegraphics[width=1in,height=1.25in,clip,keepaspectratio]{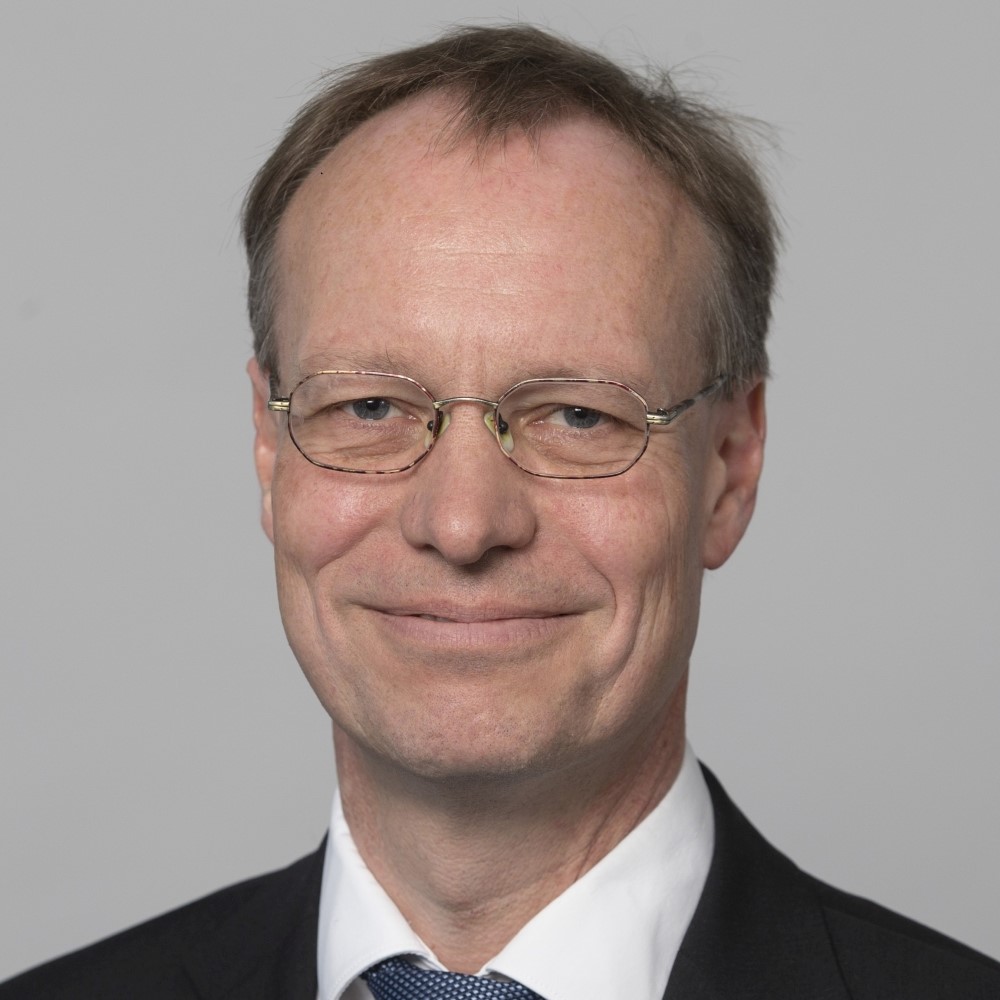}}]{Oliver Sawodny} (Senior Member, IEEE) received the Dipl.-Ing. degree in electrical engineering from the University of Karslruhe, Karlsruhe, Germany, in 1991, and the Ph.D. degree from Ulm University, Ulm, Germany, in 1996. In 2002, he became a Full Professor with the Technical University of Illmenau, Illmenau, Germany. Since 2005, he has been the Director of the Institute for System Dynamics, University of Stuttgart, Stuttgart, Germany. His current research interests include methods of differential geometry, trajectory generation, and applications to mechatronic systems.
\end{IEEEbiography}

\end{document}